\documentclass[a4paper,11pt]{article}
\usepackage[margin=1in]{geometry}
\usepackage{amsfonts}
\usepackage{amsthm}
\usepackage{amsmath}
\usepackage{amssymb}
\usepackage{blkarray}
\usepackage[dvipsnames]{xcolor}
\usepackage{graphicx} 
\usepackage{subcaption}
\usepackage{mathtools}
\usepackage{underscore} 
\usepackage{nicematrix}

\usepackage{newtxtext,newtxmath}

\usepackage{enumitem}
\usepackage{hyperref}
\usepackage{crossreftools}
\usepackage{tikz}
\usepackage{tikzit}
\usetikzlibrary{quantikz2} 
\usepackage{algorithm}
\usepackage{algpseudocodex}
\usepackage{bm}
\usepackage{stmaryrd}
\usepackage{anyfontsize}
\usepackage{dsfont} 
\usepackage{multicol}
\usepackage{xspace}
\usepackage{lipsum}
\usepackage{booktabs}
\usepackage{authblk}
\usepackage{pifont}
\usepackage{aligned-overset}

\tikzstyle{box}=[shape=rectangle, text height=1.5ex, text depth=0.25ex, yshift=0.5mm, fill=white, draw=black, minimum height=5mm, yshift=-0.5mm, minimum width=5mm, font={\small}]
\tikzstyle{Z dot}=[inner sep=0mm, minimum size=2mm, shape=circle, draw=black, fill={rgb,255: red,221; green,255; blue,221}]
\tikzstyle{Z phase dot}=[minimum size=1.2em, font={\footnotesize\boldmath}, shape=rectangle, rounded corners=0.5em, inner sep=0.2em, outer sep=-0.2em, scale=0.8, tikzit shape=circle, draw=black, fill={rgb,255: red,221; green,255; blue,221}, tikzit draw=blue]
\tikzstyle{X dot}=[Z dot, shape=circle, draw=black, fill={rgb,255: red,255; green,136; blue,136}]
\tikzstyle{X phase dot}=[Z phase dot, tikzit shape=circle, tikzit draw=blue, fill={rgb,255: red,255; green,136; blue,136}, font={\footnotesize\boldmath}]
\tikzstyle{hadamard}=[fill=yellow, draw=black, shape=rectangle, inner sep=0.6mm, minimum height=1.5mm, minimum width=1.5mm]
\tikzstyle{vertex}=[inner sep=0mm, minimum size=1mm, shape=circle, draw=black, fill=black]
\tikzstyle{vertex set}=[inner sep=0mm, minimum size=1mm, shape=circle, draw=black, fill=white, font={\footnotesize\boldmath}]
\tikzstyle{target}=[inner sep=0mm, minimum size=3mm, shape=circle, draw=black]

\tikzstyle{hadamard edge}=[-, dashed, dash pattern=on 2pt off 1.5pt, thick, draw={rgb,255: red,68; green,136; blue,255}]
\tikzstyle{brace edge}=[-, tikzit draw=blue, decorate, decoration={brace,amplitude=1mm,raise=-1mm}]
\tikzstyle{diredge}=[->]
\tikzstyle{dashed edge}=[-, dashed, dash pattern=on 2pt off 0.5pt, draw=black]

\graphicspath{ {./img/} }

\makeatletter
\newcommand{\optionaldesc}[2]{%
  \phantomsection
  #1\protected@edef\@currentlabel{#1}\label{#2}%
}
\makeatother

\theoremstyle{definition}
\newtheorem{theorem}{Theorem}[section]
\newtheorem{corollary}[theorem]{Corollary}
\newtheorem{lemma}[theorem]{Lemma}
\newtheorem*{lemma*}{Lemma}
\newtheorem*{proposition*}{Proposition}
\newtheorem*{remark*}{Remark}
\newtheorem{proposition}[theorem]{Proposition}

\newtheorem{definition}[theorem]{Definition}
\newtheorem{example}[theorem]{Example}
\newtheorem{remark}[theorem]{Remark}

\newtheorem*{rep@theorem}{\rep@title}
\newcommand{\newreptheorem}[2]{%
	\newenvironment{rep#1}[1]{%
    \def\rep@title{#2 \ref{##1} (restated)}%
		\begin{rep@theorem}}%
		{\end{rep@theorem}}}
\newreptheorem{thm}{Theorem}

\DeclareMathOperator{\oddop}{Odd}
\newcommand{\odd}[1]{\oddop\mathopen{}\left(#1\right)\mathclose{}} 

\newcommand{\odds}[2]{\mathsf{Odd}_{#1}\left(#2\right)}

\newcommand{\symd}{\mathbin{\Delta}\xspace} 
\newcommand{\Symdi}[1]{\underset{\scriptstyle #1}{\scalebox{1.5}{$\symd$}}\,} 
\newcommand{\comp}[1]{\bar{#1}} 
\newcommand{\ld}{\lambda}
\newcommand{\sse}{\subseteq}
\newcommand{\abs}[1]{\left| #1 \right|} 
\newcommand{\FF}{\mathbb{F}}

\newcommand{\LOG}{labelled open graph}
\newcommand{\A}{{A}}

\newcommand{\I}{Id}
\newcommand{\Xlike}{\mathcal{X}}

\newcommand{\planar}{\mathcal{L}}

\newcommand{\Wpred}{\mathcal{W}_{\mathrm{pred}}}
\newcommand{\Vsucc}{\mathcal{V}_{\mathrm{succ}}}
\newcommand{\corrset}{L}
\newcommand{\outneighb}{T}

\newcommand{\ketbra}[2]{\ket{#1}\!\bra{#2}}

\renewcommand{\t}[1]{\ensuremath{^{\otimes #1}}}
\newcommand{\pow}[1]{\mathfrak{P}\!\left(#1\right)}

\title{Generating one-way computations with flow: flow-preserving rewriting that ignores the interpretation}

\author{Miriam Backens}
\affil{Université de Lorraine, CNRS, Inria, LORIA, F-54000 Nancy, France}
\date{}

\begin{document}

\maketitle

\begin{abstract}
	\noindent The one-way model is a universal model of quantum computation, driven by successive adaptive single-qubit measurements on an entangled resource state.
	Measurements are non-deterministic, yet if the computation satisfies one of several related families of conditions known as `flows', the computation can be made deterministic overall by modifying later measurements depending on the outcomes of earlier ones.
	Flow properties also enable efficient translation from one-way computations to circuits, motivating research into rewriting one-way computations while preserving the existence of flow.
	Existing approaches to flow-preserving rewriting are used for compilation or optimisation and preserve both the interpretation and the existence of flow.
	
	Here, we broaden our perspective to consider flow-preserving rewriting that does not necessarily preserve the interpretation, with applications to creating test instances for software that works with flow, as well as to generating ans\"atze for quantum machine learning.
	We show that a family of just three flow-preserving rewrite rules suffices to generate any diagram with flow from a trivial diagram with the desired number of inputs and outputs.
	This rule set is nearly the same as the complete set of flow- and interpretation-preserving rewrite rules for one-way computations in which all measurements are Pauli; and just a small subset of the flow- and interpretation-preserving rewrite rules for arbitrary measurements.
\end{abstract}

\section{Introduction}

The one-way model of measurement-based quantum computing (MBQC) is a universal model of quantum computation, analogous to quantum circuits.
It is a promising candidate for implementing photonic quantum computation as its structure of single-qubit measurements on an entangled resource state enables `front-loading' the entangling gates, which are particularly difficult to implement in photonics.
At the same time, the one-way model also has applications in theoretical quantum computing: its structure is much more flexible than that of quantum circuits, leading to advantages in circuit optimisation \cite{broadbent_parallelizing_2009,duncanGraph-theoretic2020}.
This additional flexibility has also been exploited for generating ans\"atze for quantum architecture search and quantum machine learning
\cite{ewen_application_2025,calderonMeasurementbasedQuantumMachine2025}.
Throughout these applications, flow properties \cite{danos_determinism_2006,browne_generalized_2007} are vital for ensuring deterministic implementability of a one-way computation, as well as efficient translation to quantum circuits.
A flow consists of a partial order on the measurements and a procedure for correcting undesired measurement outcomes, which must satisfy certain compatibility conditions.
There are different variants, from \emph{causal flow}, which characterises one-way computations that are very close to quantum circuits, via \emph{gflow} and \emph{extended gflow} to \emph{Pauli flow}.

The ZX-calculus is a graphical formalism for expressing and reasoning about quantum computations with applications to a variety of tasks at different levels of the quantum computational stack.
It also provides a common language for one-way computations and quantum circuits, and for translations between them \cite{duncanGraph-theoretic2020,backens_there_2021}.
While it is easy to translate a circuit into the ZX-calculus, translating a ZX-diagram into a quantum circuit is \#P-hard in general \cite{de_beaudrap_circuit_2022}.
Yet it is straightforward to bring a ZX-diagram into an `MBQC form' which means it can be interpreted as a one-way computation; and if this computation has flow, then there is also an efficient translation to a circuit \cite{duncanGraph-theoretic2020,backens_there_2021,simmons_relating_2021-1}.
Moreover, a recent work generalises flow properties to a broader class of ZX-diagrams than just those in MBQC-form under the name of `ZX-flow' \cite{kissingerZXFlowFlexibleCriterion2026}.
This means flow properties are important to anyone working with the ZX-calculus, whether or not they they are using the one-way model directly.

One major strength of the ZX-calculus are its complete equational theories: whenever two diagrams represent the same linear map, these equational theories allow one diagram to be transformed into the other entirely graphically \cite{backens_zx-calculus_2014,hadzihasanovic_two_2018,vilmart_near-minimal_2019}.
Yet the traditional equational theories for the ZX-calculus do not preserve flow, or even necessarily the form of a one-way computation.
Over the last years, researchers have thus developed flow-preserving rewrite rules: i.e.\ equations that preserve not only the interpretation of a ZX-diagram but also the MBQC form and the existence of flow \cite{duncanGraph-theoretic2020,backens_there_2021,simmons_relating_2021-1,mcelvanney_complete_2023,mcelvanney_flow-preserving_2023,backens_inserting_2025,kissingerZXFlowFlexibleCriterion2026}.
These have been used for a range of applications in compilation and optimisation \cite{duncanGraph-theoretic2020,holker_causal_2023,cao_multi-agent_2023}.

In this work, we broaden the concept of flow-preserving rewriting by dropping the requirement that the interpretation has to be preserved\footnote{This work is partly inspired by a talk of Rajarsi Pal at the Graphix Workshop 2024, which did not involve a completeness proof and -- as far as we can tell -- has never led to a publication.}.
Instead of transforming a given computation into a more preferred form, this changes the goal to that of generating one-way computations with flow.
These are needed as examples for research into the one-way model and as test cases for software that handles one-way computations, such as the library \emph{Graphix}\footnote{Available at \url{https://github.com/teamgraphix/graphix}.}.
Rewriting without preserving the interpretation also makes the increased flexibility of one-way computations (compared to quantum circuits) available to quantum architecture search and other quantum machine learning applications.
We therefore give a set of three rewrite rules, stated using the ZX-calculus, that preserve flow but not necessarily the interpretation.
Some of the rules come with side conditions that are required for flow-preservation.
This rule set is very close to the complete set of flow- and interpretation-preserving rewrite rules for one-way computations where all measurements are Pauli (so that the overall computation is within the stabiliser fragment) \cite{mcelvanney_complete_2023}.
We show that these three rules suffice to transform any one-way computation with Pauli flow or gflow to a trivial normal form with the same number of inputs and outputs, and conversely.
Removing any one of the rules would mean losing the ability to generate arbitrary diagrams with flow: in other words, the set is minimal.
We also discuss how to use the new rule set to generate computations with flow.

In the following, we first present background material about the ZX-calculus, the one-way model, and the Pauli flow and gflow properties in Section~\ref{s:background}.
We then state the set of flow- but not necessarily interpretation-preserving rewrite rules in Section~\ref{s:rule-set} and derive some further useful rewrite rules in Section~\ref{s:derived}.
The completeness and minimality proofs are in Section~\ref{s:completeness}.
Approaches to generating computations with flow are discussed in Section~\ref{s:generating}, followed by the conclusions in Section~\ref{s:conclusions}.

\section{Preliminaries}
\label{s:background}

We begin by introducing relevant background material.
First, we give the basics of the ZX-calculus, which will later be used to express the flow-preserving rewrite rules.
Then we introduce the one-way model and give a formal definition of flow.

\subsection{A very short introduction to the ZX-calculus}

Where quantum circuits consist of gates connected by straight horizontal wires, the ZX-calculus is more flexible.
Its three generators are green $Z$-spiders and red $X$-spiders, which can have arbitrary numbers of input and output wires, as well as square Hadamard nodes, which have one input and one output.
They represent the following linear maps:
\begin{align*}
	\input{zx-basics/green-spider.tikz} \quad&\rightsquigarrow\quad \ket{0}\t{n}\bra{0}\t{m} + e^{i\alpha} \ket{1}\t{n}\bra{1}\t{m} \\
	\input{zx-basics/red-spider.tikz} \quad&\rightsquigarrow\quad \ket{+}\t{n}\bra{+}\t{m} + e^{i\alpha} \ket{-}\t{n}\bra{-}\t{m} \\
	\begin{tikzpicture}
	\begin{pgfonlayer}{nodelayer}
		\node [style=hadamard] (0) at (0, 0) {};
		\node [style=none] (1) at (-1, 0) {};
		\node [style=none] (2) at (1, 0) {};
	\end{pgfonlayer}
	\begin{pgfonlayer}{edgelayer}
		\draw (1.center) to (2.center);
	\end{pgfonlayer}
\end{tikzpicture}
 \quad&\rightsquigarrow\quad \ketbra{+}{0} + \ketbra{-}{1}
\end{align*}
Spiders carry a \emph{phase label} $\alpha\in\mathbb{R}$, which can also be restricted without loss of generality to one of the intervals $[0,2\pi)$ or $(-\pi,\pi]$.
A spider with no label is interpreted as having phase $0$.

ZX-diagrams are composed in series by connecting the output wires of one diagram (reading from left to right, like circuits) to the input wires of another; this corresponds to the composition of the associated linear maps.
Diagrams can also be composed in parallel by juxtaposing them, this corresponds to the tensor product of the associated linear maps.
As long as the inputs and outputs of the overall diagram remain in the same order, a ZX-diagram can be seen as a graph: the relative positions of the generators in the plane of the paper do not matter, as long as the same components are connected in the same ways.
This is known as `only connectivity matters' and it allows vertical lines to meaningfully appear in diagrams, for example the controlled-$Z$ gate is represented in the ZX-calculus (up to normalisation) as:
\ctikzfig{zx-basics/cz-ex}
Non-zero scalar factors are commonly ignored in the ZX-calculus, particularly when working with unitary operators or unitary embeddings, where the normalisation is fixed by definition and any scalar factor of absolute value 1 has no physical effect anyway.

Since we will not need the full equational theory of the ZX-calculus here, we introduce only two straightforward equations that are used for implicit simplification of diagrams.
The \emph{spider rule} allows spiders of the same colour to be merged if they are connected by at least one wire, adding their phase labels.
The \emph{colour change rule} expresses that applying Hadamards to all inputs and outputs of a spider changes its colour:
\[
	\input{zx-basics/spider-fusion-Z.tikz} \qquad\qquad\qquad\qquad
	\input{zx-basics/colour-change.tikz}
\]
To avoid visual clutter in diagrams, a Hadamard that is connected to spiders on both ends is often denoted by a dashed blue line:
\ctikzfig{zx-basics/hadamard-edge}
For a full introduction to the ZX-calculus including the complete equational theories, see e.g.~\cite{vandewetering2020zxcalculus}.

\subsection{The one-way model}

A one-way computation proceeds via successive adaptive single-qubit measurements on an entangled resource state \cite{raussendorf_one-way_2001}.
This resource state is usually a graph state; a stabiliser state that corresponds to a simple graph as follows: for each vertex in the graph, a qubit is prepared in the state $\ket{+}$, and for each edge in the graph, a controlled-$Z$ gate is applied between the corresponding qubits.
The controlled-$Z$ gates are diagonal in the computational basis and thus commute with each other; hence their order is irrelevant.
In the ZX-calculus, a graph state is represented by a diagram of $Z$-spiders connected by Hadamard edges, with one dangling edge for each spider, for example the following represents a graph state on four qubits:
\ctikzfig{graph-state-ex}

Each measurement has a \emph{desired outcome} that drives the computation in the intended direction, and an undesired outcome.
Measurements are chosen in such a way that each undesired outcome differs from the desired outcome by a Pauli operator.
This makes it possible to use stabilisers of the underlying graph state to `correct' undesired measurement outcomes after they have happened, at the cost of (usually) introducing further Pauli operators on other qubits called \emph{by-products}.
If the by-products act non-trivially only on qubits that have not yet been measured, they can be absorbed into those measurements when they are later performed, and the one-way computation is overall deterministic (up to a Pauli product on the output qubits).
We now explain how to express a one-way computation and then define the different types of flow that formalise the correction procedure described above.

There are two ways of specifying a one-way computation.
The first way focuses on the time order and is called a \emph{measurement pattern}.
It consists of a sequence of commands for creating the entangled resource state and performing the single-qubit measurements as well as the corrections that are required for determinism \cite{danos_measurement_2007}.
The second way focuses on the entanglement structure; it consists of a \emph{labelled open graph} (defined below) with an additional phase angle for each measurement.
We will use the second type of specification, which allows visualisation and graphical reasoning using the ZX-calculus.

\begin{definition}
 A \emph{labelled open graph} is a tuple $(G,I,O,\ld)$ consisting of a simple graph $G=(V,E)$, subsets $I,O\sse V$ called the \emph{input} and \emph{output vertices}, and a function $\ld:\comp{O}\to\{X,Y,Z,XY,XZ,YZ\}$ called the \emph{measurement labelling}, where $\comp{O} := V\setminus O$ is the complement of $O$ in $V$.
\end{definition}

The `graph' part of a labelled open graph specifies the resource graph state.
The inputs are qubits that are not prepared in the state $\ket{+}$ but are instead provided by some previous computation and entangled with the other qubits.
Analogously, the outputs are those qubits that are not measured but can be passed on to later computational layers.
Finally, the measurement labelling gives information about how each non-output qubit is to be measured: $X,Y,Z$ denote the eigenbases of the three Pauli matrices, while the \emph{planar measurements} $XY,XZ,YZ$ are families of measurements in the planes of the Bloch sphere spanned by the eigenbases of two Paulis.
The labelled open graph encodes all the information that is relevant to verifying whether a computation can be implemented deterministically.
To compute the linear map implemented by a one-way computation, it is necessary to specify some additional information for each measurement, namely a phase angle $\alpha$ that indicates which measurement in a plane is the desired one (or which eigenstate is the desired outcome for the case of a Pauli measurement).

Measurements labelled $XZ, YZ$, or $Z$ will be called \emph{Z-like} and measurements labelled $XY, X$, or $Y$ will be called \emph{X-like}.
The term \emph{boundary vertices} refers to $I\cup O$ and vertices in $\planar:= \{v\in V\setminus (I\cup O)\mid \ld(v)\in\{XY,YZ,XZ\}\}$ are called \emph{internal} planar measurements.

In the ZX-calculus, a labelled open graph is represented by a graph state diagram with an additional input wire for each input vertex, and one of the following measurement effects for each measured qubit, depending on the measurement labels, where $\alpha\in\mathbb{R}$ and $a\in\{0,1\}$:
\begin{center}
	\begin{tabular}{c|c|c|c|c|c}
		$XY$ & $XZ$ & $YZ$ & $X$ & $Z$ & $Y$ \\ \hline
		\begin{tikzpicture}[baseline=-0.05]
	\begin{pgfonlayer}{nodelayer}
		\node [style=none] (0) at (-0.75, 0) {};
		\node [style=Z phase dot] (1) at (0.25, 0) {$\alpha$};
	\end{pgfonlayer}
	\begin{pgfonlayer}{edgelayer}
		\draw (0.center) to (1);
	\end{pgfonlayer}
\end{tikzpicture}
 & \begin{tikzpicture}[baseline=-0.05, scale=.8]
	\begin{pgfonlayer}{nodelayer}
		\node [style=none] (0) at (-0.75, 0) {};
		\node [style=Z phase dot] (1) at (0.25, 0) {$\frac{\pi}{2}$};
		\node [style=X phase dot] (2) at (1.5, 0) {$\alpha$};
	\end{pgfonlayer}
	\begin{pgfonlayer}{edgelayer}
		\draw (0.center) to (1);
		\draw (1) to (2);
	\end{pgfonlayer}
\end{tikzpicture}
 & \begin{tikzpicture}[baseline=-0.05]
	\begin{pgfonlayer}{nodelayer}
		\node [style=none] (0) at (-0.75, 0) {};
		\node [style=X phase dot] (1) at (0.25, 0) {$\alpha$};
	\end{pgfonlayer}
	\begin{pgfonlayer}{edgelayer}
		\draw (0.center) to (1);
	\end{pgfonlayer}
\end{tikzpicture}
 & \begin{tikzpicture}[baseline=-0.05]
	\begin{pgfonlayer}{nodelayer}
		\node [style=none] (0) at (-0.75, 0) {};
		\node [style=Z phase dot] (1) at (0.25, 0) {$a\pi$};
	\end{pgfonlayer}
	\begin{pgfonlayer}{edgelayer}
		\draw (0.center) to (1);
	\end{pgfonlayer}
\end{tikzpicture}
 & \begin{tikzpicture}[baseline=-0.05]
	\begin{pgfonlayer}{nodelayer}
		\node [style=none] (0) at (-0.75, 0) {};
		\node [style=X phase dot] (1) at (0.25, 0) {$a\pi$};
	\end{pgfonlayer}
	\begin{pgfonlayer}{edgelayer}
		\draw (0.center) to (1);
	\end{pgfonlayer}
\end{tikzpicture}
 & \begin{tikzpicture}[baseline=-0.05]
	\begin{pgfonlayer}{nodelayer}
		\node [style=none] (0) at (-1.25, 0) {};
		\node [style=Z phase dot] (1) at (0.25, 0) {$(-1)^a\frac{\pi}{2}$};
		\node [style=none] (2) at (2.75, 0) {};
		\node [style=X phase dot] (3) at (4.5, 0) {$(-1)^{a+1}\frac{\pi}{2}$};
		\node [style=none] (4) at (2, 0) {$=$};
	\end{pgfonlayer}
	\begin{pgfonlayer}{edgelayer}
		\draw (0.center) to (1);
		\draw (2.center) to (3);
	\end{pgfonlayer}
\end{tikzpicture}

	\end{tabular}
\end{center}

\subsection{Pauli flow, in its algebraic formulation}

The first type of flow to be defined was the type now known as causal flow \cite{danos_determinism_2006}, which applies to one-way computations that are very close to being circuits.
It was later generalised to gflow and Pauli flow~\cite{browne_generalized_2007}, which allow more complex types of corrections.
The latter two types differ from each other only in that Pauli flow allows the Pauli measurement labels $X,Y,Z$ while gflow requires all measurements to be planar.
The original definitions of gflow and Pauli flow rely on a partial order on the qubits, which restricts the time ordering of the measurement, and a \emph{correction function}, which for each measurement (implicitly) specifies the stabiliser that should be used to correct it.
The compatibility conditions that must be satisfied by these two mathematical objects are somewhat cumbersome to work with, particularly when it comes to verifying whether the insertion of a new qubit (with some given measurement label and neighbourhood) is flow-preserving.

The recent algebraic formulation of Pauli flow (which restricts straightforwardly to gflow) instead expresses the flow conditions in terms of two matrices called \emph{flow-demand} and \emph{order-demand matrix}, which are constructed from the adjacency matrix of the underlying graph \cite{mitosekAlgebraicFormulationPauli2026}.
The right inverse of the flow-demand matrix, called correction matrix, encodes the correction function.
The product of order-demand matrix and the correction matrix is required to be a directed acyclic graph (DAG), this DAG encodes the partial order.
Strictly speaking, the algebraic definition refers to a more restricted type of gflow or Pauli flow called `focused', yet this is without loss of generality as a labelled open graph has focused Pauli flow if and only if it has Pauli flow \cite{simmons_relating_2021-1} and it has focused gflow if and only if it has gflow \cite{mhalla_which_2014,backens_there_2021}.
We now give the relevant definitions, in which all matrices are defined over $\FF_2$.
Rows and columns of these matrices are assumed to be labelled by subsets of the vertices: given $S,T\sse V$, we will write `an $S\times T$ matrix' for a matrix whose rows are labelled by the elements of $S$ and whose columns are labelled by the elements of $T$.

\begin{definition}[{\cite[Definition~3.16]{mitosekAlgebraicFormulationPauli2026}}]
  Let $\Gamma=(G,I,O,\ld)$ be a labelled open graph.
    Then the \emph{extended adjacency matrix} $\A$ of $\Gamma$ satisfies $\A_{v,w} = 1$ if and only if either $\{v,w\}\in E$ or $v=w \wedge \ld(v)\in\{Y,XZ\}$.
\end{definition}

\begin{definition}[{\cite[Definition~3.4]{mitosekAlgebraicFormulationPauli2026}}]\label{def:flow-demand}
    Let $\Gamma = (G,I,O,\lambda)$ be a labelled open graph with extended adjacency matrix $\A$.
    The \emph{flow-demand matrix} $M_{\Gamma}$ is the $\comp{O}\times\comp{I}$ matrix with rows corresponding to non-outputs and columns corresponding to non-inputs, whose $v$-labelled row satisfies the following properties:
    \begin{itemize}
    	\item if $\lambda(v) \in \{ X, Y, XY \}$, then $M_{v,w} = \A_{v,w}$ for all $w\in\comp{I}$, and
    	\item if $\lambda(v) \in \{Z, YZ, XZ\}$, then $M_{v,v} = 1$ and $M_{v,w} = 0$ for all $w\in\comp{I}\setminus\{v\}$.
    \end{itemize}
\end{definition}

\begin{definition}[{\cite[Definition~3.5]{mitosekAlgebraicFormulationPauli2026}}]\label{def:order-demand}
     Let $\Gamma = (G,I,O,\lambda)$ be a labelled open graph with extended adjacency matrix $\A$.
     The \emph{order-demand matrix} $N_{\Gamma}$ is the $\comp{O}\times\comp{I}$ matrix whose $v$-labelled row satisfies the following properties:
     \begin{itemize}
     	\item if $\lambda(v) \in \{ X, Y, Z \}$, then $N_{v,w} = 0$ for all $w\in\comp{I}$;
     	\item if $\lambda(v) \in \{ XZ, YZ \}$, then $N_{v,w} = \A_{v,w}$ for all $w\in\comp{I}$; and
     	\item if $\lambda(v) = XY$, then $N_{v,v} = 1$ (provided that the $v$ column exists) and $N_{v,w} = 0$ for all $w\in\comp{I}\setminus\{v\}$.
     \end{itemize}
\end{definition}

Given a labelled open graph $\Gamma = (G,I,O,\lambda)$, any $\comp{I}\times\comp{O}$ matrix over $\FF_2$ is a potential \textit{correction matrix $C$} for $\Gamma$ \cite[Definition~3.6]{mitosekAlgebraicFormulationPauli2026}.
In the traditional definition of flow, such a matrix corresponds to a correction function $c:\comp{O}\to \pow{\comp{I}}$ (where $\pow{-}$ denotes the powerset) via the relationship $u \in c(v)$ if and only if $C_{u,v} = 1$.
In other words, the $v$-labelled column of $C$ encodes the set $c(v)$.
As it is often simpler to think about a correction function than about the column of a correction matrix, we will also use the correction function notion in the following sections.

Having defined the relevant matrices, we can now give the definition of Pauli flow.
If all measurements in the underlying labelled open graph are planar, this becomes a definition of gflow.

\begin{theorem}[Algebraic formulation of Pauli flow {\cite[Theorem~3.1]{mitosekAlgebraicFormulationPauli2026}}]\label{thm:algebraic-Pauli}
 Let $\Gamma = (G,I,O,\lambda)$ be a labelled open graph, $c$ a correction function on $\Gamma$, $M$ the flow-demand matrix of $\Gamma$, and $N$ the order-demand matrix of $\Gamma$.
 Then there exists a strict partial order $\prec$ such that $(c,\prec)$ is a focused Pauli flow on $\Gamma$ if and only if
 $M_{\Gamma}C = \I$, and
 $N_{\Gamma}C$ is the adjacency matrix of a directed acyclic graph,
 where $C$ is the correction matrix corresponding to $c$.
\end{theorem}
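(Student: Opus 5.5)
The plan is to unfold both conditions of the statement into the traditional definition of focused Pauli flow and compare them vertex by vertex. Recall that a focused Pauli flow $(c,\prec)$ consists of a correction function $c:\comp{O}\to\pow{\comp{I}}$ and a strict partial order $\prec$ satisfying two groups of conditions. The \emph{parity} (focus) conditions say the following. For $v$ with $\ld(v)\in\{XY,X,Y\}$ we need $v\in c(v)$ or the appropriate membership of $v$ in $\odd{c(v)}$. For every other non-output $w\neq v$ with $\ld(w)\in\{XY,X,Y\}$ we need $w\notin\odd{c(v)}\symd (c(v)\cap\{w:\ld(w)=Y\})$. For $w\neq v$ with $\ld(w)\in\{XZ,YZ,Z\}$ we need $w\notin c(v)$. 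The \emph{order} conditions say that every $w\neq v$ which is ``demanded'' by the correction of $v$ must satisfy $v\prec w$. For $\ld(w)=XY$ this means $w\in c(v)$. For $\ld(w)\in\{XZ,YZ\}$ it means $w\in\odd{c(v)}$, adjusted by the $XZ$ self-term.

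The first step is to compute the column $(M_\Gamma C)_{\cdot,v}$, i.e.\ $M_\Gamma$ applied to the indicator vector of $c(v)$, entry by entry. Take a row $w$ with $\ld(w)\in\{X,Y,XY\}$. Its entry is $\sum_{u\in c(v)}\A_{w,u}$. This is the parity of $|N(w)\cap c(v)|$, plus $[w\in c(v)]$ when $\ld(w)=Y$. So the entry is exactly the indicator of $w\in\odd{c(v)}$, with the $Y$-diagonal correction built into the extended adjacency matrix. Now take a row $w$ with a Z-like label. Its entry is just $[w\in c(v)]$. Hence $M_\Gamma C=\I$ says that for $w\neq v$ every focus condition holds, and that the diagonal entry equals $1$. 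The diagonal entry gives $v\in\odd{c(v)}$ (appropriately $Y$-adjusted) for X-like $v$, and $v\in c(v)$ for Z-like $v$. I would then check that these diagonal requirements coincide with the self-conditions of focused Pauli flow for each of the six labels. For instance, $XZ$ requires $v\in c(v)$ and $v\in\odd{c(v)}$. The second of these is exactly where the order-demand diagonal $\A_{v,v}=1$ enters, and it will be handled in the next step.

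The second step is to compute $(N_\Gamma C)_{w,v}$ in the same way. For Pauli-labelled $w$ it is $0$. For $\ld(w)=XY$ it is $[w\in c(v)]$. For $\ld(w)\in\{XZ,YZ\}$ it is $[w\in\odd{c(v)}]$, with the $XZ$ diagonal term included. Thus $(N_\Gamma C)_{w,v}=1$ precisely when the Pauli flow definition demands $v\prec w$. The diagonal entries $(N_\Gamma C)_{v,v}$ must vanish for a DAG. For $XZ$ this forces the required self-parity. For $XY$ the entry is $[v\in c(v)]$, which must be $0$, matching the traditional requirement $v\notin c(v)$ for $XY$. For $YZ$ it is $[v\in\odd{c(v)}]=0$, again matching.

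Both directions then follow. If $(c,\prec)$ is a focused Pauli flow, the parity conditions give $M_\Gamma C=\I$. Every edge $v\to w$ of $N_\Gamma C$ satisfies $v\prec w$, and a graph whose edges respect a strict partial order is acyclic. Conversely, if the two matrix conditions hold, define $\prec$ as the transitive closure of the edge relation of the DAG $N_\Gamma C$. This is a strict partial order because the graph is acyclic, and it contains every demanded relation. The conditions $M_\Gamma C=\I$ give the focus conditions. I expect the main obstacle to be bookkeeping rather than any conceptual difficulty. The delicate points are the diagonal conventions: the $Y$ and $XZ$ self-loops in $\A$, the clause ``provided that the $v$ column exists'' for inputs, and the fact that $C$ has rows indexed by $\comp{I}$. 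Matching these exactly against each of the six label cases of the original Pauli flow definition of \cite{browne_generalized_2007}, in its focused form from \cite{simmons_relating_2021-1}, is where I would work most carefully, going case by case through a table of labels.
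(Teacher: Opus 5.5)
Your proposal is correct and is the standard argument: unfold $M_\Gamma C=\I$ and the DAG condition on $N_\Gamma C$ entry by entry against the focused Pauli flow conditions (the paper gives no proof of its own here, only the citation to Mitosek--Backens). Make two things explicit.

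\begin{itemize}
\item Your order conditions (every demanded $w\neq v$ satisfies $v\prec w$) are Simmons' form of (P2)/(P3), stated with $\neg(v\prec w)$. Your step ``every edge of $N_\Gamma C$ respects $\prec$'' genuinely needs this form. With the original ``$w\preceq v$'' form of \cite{browne_generalized_2007} you only get $\neg(w\prec v)$, and the ``only if'' direction then fails. For example, take two adjacent $YZ$ vertices $a,b$ with $I=O=\emptyset$, $c(a)=\{a\}$, $c(b)=\{b\}$ and the empty order: this is a focused flow in the original sense, yet $N_\Gamma C$ is a $2$-cycle.
\item In the converse direction, the transitive closure of $N_\Gamma C$ only orders $\comp{O}$. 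You should additionally place all outputs above $\comp{O}$, so that (P1) holds for outputs appearing in correction sets.
\end{itemize}
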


We will generally work with the partial order that is the transitive closure of the relationships encoded in the directed acyclic graph $N_{\Gamma}C$; this order is called the \emph{induced partial order}.
It has the property that all inputs are minimal and all outputs are maximal.

\begin{remark}\label{rem:outputs-partial-order}
 Note that the outputs appear in the partial order given by the directed acyclic graph $N_\Gamma C$, but only with relationships that arise from their appearances in correction sets (their appearances in odd neighbourhoods are irrelevant to the partial order).
 This will be important later.
\end{remark}

The following layering often associated with the partial order of a flow will be useful.

\begin{definition}[{\cite[Definition~4]{mhallaFinding2008}}]\label{def:layering}
 For a labelled open graph $\Gamma = (G,I,O,\ld)$ with $G=(V,E)$ and a partial order $\prec$ on $\Gamma$, define a \emph{layering} of the vertices via the following partition: let
 \[
  V_k^\prec = \begin{cases}
               \max_\prec(V) &\text{if } k = 0 \\
               \max_\prec(V\setminus(\bigcup_{i<k}V_i^\prec)) &\text{if } k > 0
              \end{cases}
 \]
 where $\max_\prec(A) = \{u\in A \mid \forall v\in A, \neg(u\prec v)\}$ is the set of the maximal elements of the set $A$ with respect to $\prec$.
 The \emph{depth} $d$ of the layering is the smallest integer $d$ such that $V_{d+1}^\prec = \emptyset$.
\end{definition}

If a partial order is associated with a flow, the depth of the corresponding layering is also referred to as the depth of the flow.
Whenever $\prec$ is the partial order of a Pauli flow, we have $O\sse V_0^\prec$ \cite{simmons_relating_2021-1}.
Moreover, if all measurements in $\Gamma$ are planar, then $V_0^\prec = O$ and $V_1^\prec$ is the set of vertices that can be corrected using only outputs \cite{backens_there_2021}.

\section{Flow-preserving rewrite rules}
\label{s:rule-set}

We now present the flow-preserving rewrite rules, summarised in Figure~\ref{fig:flow-preserving}, and discuss where these rules (or their interpretation-preserving equivalents) were first proved to be flow-preserving.
This rule set is quite similar to the complete flow- and interpretation-preserving rule set for one-way computations in the Clifford fragment \cite{mcelvanney_complete_2023}, the only changes are the presence of non-trivial phase variables in \eqref{ZL} and a (non--interpretation-preserving) simplification of \eqref{IO}.

As flow-preservation has to be proved for each direction of a rule separately, we use arrows instead of equation symbol to indicate this directionality.

\begin{figure}
	\fbox{\begin{minipage}{.985\textwidth}
			\vspace{-.5em}
			\begin{subfigure}{.98\textwidth}
				\begin{align}\label{IO}\tag{IO}
					\input{rules/input-splitting-lhs.tikz} \quad\rightleftarrows\quad \input{rules/input-splitting-rhs.tikz} \qquad\qquad\qquad\qquad
					\input{rules/output-splitting-lhs.tikz} \quad\rightleftarrows\quad \input{rules/output-splitting-rhs.tikz}
				\end{align}
			\end{subfigure}
			
			\begin{subfigure}{0.49\textwidth}
			\begin{align}\label{LC}\tag{LC}
				\input{rules/LC-lhs.tikz} \quad\to\quad \input{rules/LC-rhs.tikz}
			\end{align}
			\end{subfigure}
			\begin{subfigure}{0.49\textwidth}
			\begin{align}\label{ZL}\tag{ZL}
				\input{rules/Z-like-rhs.tikz} \quad\underset{*}{\rightleftarrows}\quad \input{rules/Z-like-lhs.tikz}
			\end{align}
			\end{subfigure} \\
	\end{minipage}}
\caption{The complete flow-preserving (but not necessarily interpretation-preserving) rule set. The vertices marked $i$ in \eqref{IO} are inputs and the vertices marked $o$ are outputs. In both \eqref{IO} and \eqref{ZL}, $\alpha$ is an arbitrary phase angle; in the latter rule moreover $b\in\{0,1\}$. The top vertex in \eqref{LC} cannot be an input. The right-to-left application of \eqref{ZL} (marked with an asterisk) is flow-preserving only under certain conditions. All other rule directions always preserve the existence of gflow and Pauli flow.}
\label{fig:flow-preserving}
\end{figure}

\subsection{Basic flow-preserving rewrite rules}

Rule \eqref{IO} allows the extension or contraction of `dangling wires', which corresponds to inserting or removing degree-2 vertices along a path graph beginning at an input or output vertex.
That inserting or removing \emph{pairs} of vertices in this way preserves both gflow and interpretation is implicit in Refs.~\cite[Equation~(7)]{duncanGraph-theoretic2020} and~\cite[Section~4.1]{backens_there_2021}.
The gflow-preserving property of inserting or removing \emph{single} vertices in this way is also known \cite[Section~4.1]{backens_there_2021}; it is straightforward to see that this simpler variant does not generally preserve the interpretation.
Both sources referenced above work with gflow, hence without loss of generality, any new vertices arising from this rule may be considered to be $XY$-measured.

Rule \eqref{LC} is a local complementation, an equivalence operation that complements all edges among neighbours of some chosen vertex (i.e.\ deleting existing edges and inserting edges that were not previously there) and applies certain local Clifford operations to the chosen vertex and the neighbours \cite{van_den_nest_graphical_2004}.
Applying four successive local complementations about the same vertex has a trivial net effect, i.e.\ \eqref{LC} effectively provides its own inverse.
This is why, without loss of generality, the rule is expressed as having a unique direction.
Local complementation always preserves the existence of flow, assuming the central vertex is not an input -- see Ref.~\cite[Lemma~4.3]{backens_there_2021} for the gflow case and Ref.~\cite[Lemma~D.15]{simmons_relating_2021-1} for the Pauli flow case.
It preserves the interpretation if phase angles are updated suitably, which leads to problems if the transformation is applied to an output or to a neighbour of an output.
If the interpretation is not of interest, phase angles can be ignored instead.
In particular, if the interpretation is irrelevant, no particular care is needed for outputs or neighbours of an output.

Given an edge $uv$ in a graph, a sequence of three local complementations about first $u$, then $v$, then $u$ is known as a \emph{pivot}.
Local complementing about first $v$, then $u$, then $v$ has the same effect, so the pivot depends only on the edge, not its orientation.
As local complementations are flow-preserving, pivots preserve flow too \cite{backens_there_2021,simmons_relating_2021-1}; they are self-inverse.
The following is a simplified schematic representation of a pivot: it swaps the places of the two vertices $u$ and $v$, and moreover toggles edges connecting elements of three subsets of the vertices: joint neighbours of $u$ and $v$, neighbours of $u$ but not $v$, and neighbours of $v$ but not $u$.
\begin{equation}\label{pivot}
	\input{pivot.tikz} \tag{P}
\end{equation}

Rule \eqref{ZL} is the deletion (if read left-to-right) or insertion (if read right-to-left) of a $Z$-like measured vertex.
Deletion of $Z$-like measurements always preserves the existence of flow; it preserves the interpretation only for certain angles \cite[Section~4.3]{backens_there_2021} \cite[Section~D.2]{simmons_relating_2021-1}.
Insertion of Pauli $Z$-measurements is always flow-preserving and preserves the interpretation if the angle is 0 \cite[Prop.~4.1]{mcelvanney_complete_2023}.
For insertion of $XZ$ and $YZ$-measurements, flow preservation depends on the choice of neighbourhood as shown in the below theorem.
This is stated for Pauli flow but also applies to gflow if all measurements are planar.

\begin{theorem}[{\cite[Theorems~4.3 and~5.2]{backens_inserting_2025}}]\label{thm:planar-Z-like-insertion}
 Let $\Gamma = (G,I,O,\ld)$ be a \LOG\ and let $\Gamma'$ be the \LOG\ that results from inserting a new $YZ$-or $XZ$-measured vertex $z$ with neighbourhood $S\sse V$.
 Then $\Gamma'$ has Pauli flow if and only if there exists a Pauli flow $(c,\prec)$ on $\Gamma$ and a set $K\sse\comp{I}$ such that:
 \begin{enumerate}
  \item\label{it:K-focused} $K$ is focused over $\comp{O}\setminus(S\cap\Xlike)$, and moreover $\comp{O}\setminus(S\cap\Xlike)$ is the largest set over which $K$ is focused.
  \item\label{it:K-S-even} $\abs{S\cap K} \equiv 0 \bmod 2$ if $\ld(z)=YZ$, and $\abs{S\cap K} \equiv 1 \bmod 2$ if $\ld(z)=XZ$.
  \item\label{it:order} If $\Wpred := \{w\in\comp{O}: \abs{c(w)\cap S}\equiv 1 \bmod 2\}$ and $\Vsucc := \planar\cap (K \cup (\odds{G}{K} \symd S))$, then all $w\in\Wpred$ and all $v\in\Vsucc$ satisfy $\neg(v\prec w \vee v = w)$;
  i.e.\ no vertex of $\Vsucc$ is equal to or precedes any vertex of $\Wpred$ for the original order $\prec$.
 \end{enumerate}
 If the properties hold, the focused Pauli flow on $\Gamma'$is given by the extension $c'$ of $c$ to domain $\comp{O}\cup\{z\}$ satisfying $c'(z) = K\cup\{z\}$,
 and the transitive closure of ${\prec} \cup \{(w,z)\mid w\in\Wpred\} \cup \{(z,v)\mid v\in\Vsucc \}$.
\end{theorem}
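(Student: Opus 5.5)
The plan is to prove the statement through the algebraic formulation of Pauli flow (Theorem~\ref{thm:algebraic-Pauli}). I would compare the flow-demand and order-demand matrices of $\Gamma'$ with those of $\Gamma$.

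\textbf{Structure of the new matrices.} Inserting $z$ with $\ld(z)\in\{YZ,XZ\}$ adds one row and one column to $M$ and to $N$. Since $z$ is $Z$-like, its row in $M_{\Gamma'}$ is the unit vector $e_z$. Its row in $N_{\Gamma'}$ is the neighbourhood indicator $S$, together with the diagonal entry $1$ when $\ld(z)=XZ$.

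For an old row $v$, the new column entry is $\A_{v,z}=[v\in S]$ in $M_{\Gamma'}$ exactly when $v$ is $X$-like. In $N_{\Gamma'}$ it is $[v\in S]$ exactly when $v$ is planar $Z$-like, i.e.\ $XZ$ or $YZ$. All other old entries are unchanged.

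\textbf{Solving $M_{\Gamma'}C'=\I$.} I would write $C'$ in block form with the old columns $c'(w)$ and the new column $c'(z)$.

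For the $z$ row, $e_z$ forces $z\in c'(z)$ and $z\notin c'(w)$ for $w\neq z$.

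The remaining old rows then require that each old column $c'(w)=c(w)$ is a correction in $\Gamma$. They also require that $K:=c'(z)\setminus\{z\}$ satisfies
\[
M_\Gamma K = (\text{column } z \text{ of } M_{\Gamma'}) = \text{indicator of } S\cap\Xlike .
\]
This is the focusing condition~\ref{it:K-focused}: the $X$-like vertices of $S$ must lie in the odd neighbourhood of $K$, and no other non-output vertex may be affected. The maximality clause in~\ref{it:K-focused} encodes exactly that $S\cap\Xlike$ is hit.

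Condition~\ref{it:K-S-even} should come from the diagonal and parity bookkeeping at $z$. The self-loop entry for $XZ$ is what turns the parity of $\abs{S\cap K}$ from even to odd. I expect this to reflect the requirement that $z$ is not in its own odd neighbourhood in the appropriate sense, and I would derive it by expanding the $z$ row of $N_{\Gamma'}C'$ and demanding a zero diagonal.

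\textbf{Acyclicity and the order.} Next I would compute $N_{\Gamma'}C'$. Its old block is $N_\Gamma C$ plus corrections from the new column entries $[v\in S]$ on planar $Z$-like rows. Since $z\notin c(w)$, those entries contribute nothing to old columns. The new row $z$ has $1$ in column $w$ exactly when $\abs{S\cap c(w)}$ is odd, which gives the edges $w\to z$ for $w\in\Wpred$. The new column $z$ has entries $(N_\Gamma K)_v+[v\in S,\ v\text{ planar }Z\text{-like}]$ plus the $XY$ diagonal contributions, which should simplify to the indicator of $\Vsucc$.

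Acyclicity of the extended DAG is then equivalent to there being no path $v\to\dots\to w$ with $v\in\Vsucc$ and $w\in\Wpred$, and to $\Wpred\cap\Vsucc=\emptyset$. This is condition~\ref{it:order}. The same computation shows that the induced order is the stated transitive closure.

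\textbf{The converse and the main obstacle.} For the converse, any focused Pauli flow on $\Gamma'$ restricts, by the block structure above, to a flow $(c,\prec)$ on $\Gamma$ together with $K=c'(z)\setminus\{z\}$ satisfying the three conditions.

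I expect the main obstacle to be the exact simplification of column $z$ of $N_{\Gamma'}C'$ to $\Vsucc=\planar\cap(K\cup(\odds{G}{K}\symd S))$. This needs a careful case split by label: $XY$ rows contribute via $K$, while $XZ$ and $YZ$ rows contribute via $\odds{G}{K}\symd S$ together with the self-loop term for $XZ$. Pauli rows contribute nothing, and output rows do not exist. I would do this case check first, since it also fixes the precise form of the parity condition.
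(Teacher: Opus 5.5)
The paper does not prove this theorem. It quotes it from \cite{backens_inserting_2025}, so there is no in-paper argument to compare your route against. Taken on its own, your block-matrix argument via Theorem~\ref{thm:algebraic-Pauli} is correct and produces exactly the stated conditions:
\begin{itemize}
\item The $z$-row $e_z$ of $M_{\Gamma'}$ forces $z\in c'(z)$, and $z\notin c'(w)$ for every old $w$.
\item The old rows give $M_\Gamma C=\I$, and they give that $M_\Gamma K$ equals the indicator vector of $S\cap\Xlike$. This is condition~\ref{it:K-focused} once you note that ``$K$ is focused over $U$'' means precisely $(M_\Gamma K)_u=0$ for all $u\in U$.
\item The entry $(N_{\Gamma'}C')_{z,z}=\abs{S\cap K}+[\ld(z)=XZ]$ must vanish, which is condition~\ref{it:K-S-even}.
\item The old block of $N_{\Gamma'}C'$ is $N_\Gamma C$, and the new row is the indicator of $\Wpred$.
\end{itemize}

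There are three points to tighten.

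\textbf{(i) The step you call the main obstacle.} It is settled by condition~\ref{it:K-focused}, not by a new computation. Directly, $(N_{\Gamma'}C')_{v,z}$ is
\begin{itemize}
\item $[v\in K]$ on $XY$ rows,
\item $[v\in\odds{G}{K}\symd S]$ on $YZ$ rows,
\item $[v\in\odds{G}{K}\symd S]+[v\in K]$ on $XZ$ rows, and
\item $0$ on Pauli rows.
\end{itemize}
There is no further ``$XY$ diagonal contribution'' beyond $(N_\Gamma K)_v$, so drop that term; taken literally it would double-count.

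To identify this column with $\Vsucc$, you need $v\in\odds{G}{K}\Leftrightarrow v\in S$ for $XY$ rows, and $v\notin K$ for $Z$-like rows. That is exactly the statement that $M_\Gamma K$ is the indicator of $S\cap\Xlike$. Inputs contribute nothing, for two reasons: $K\sse\comp{I}$, and an $XZ$/$YZ$ input cannot occur when $M_\Gamma C=\I$, since its row of $M_\Gamma$ would be zero. Condition~\ref{it:K-focused} is derived first in the forward direction and assumed in the converse, so the identification is valid in both directions.

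\textbf{(ii) Which order condition~\ref{it:order} refers to.} Your acyclicity argument uses reachability in $N_\Gamma C$, i.e.\ the induced order. The statement, however, quantifies over any order $\prec$ of the flow. Add the observation that every order making a focused $c$ a Pauli flow contains the induced order, because each edge of $N_\Gamma C$ is forced by the flow conditions. Hence condition~\ref{it:order} for any such $\prec$ implies it for the induced order, and conversely you may take $\prec$ to be the induced order.

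\textbf{(iii) Focusedness.} The forward direction needs the fact that a Pauli flow on $\Gamma'$ can be taken to be focused. Throughout, $c$ should be read as focused, which the last sentence of the statement already presupposes.
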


Insertion and deletion of $Z$-like measurements are inverse to each other: if inserting a $Z$-like measurement with a certain neighbourhood is flow-preserving, then so is deleting this measurement \cite[Corollary~4.5]{backens_inserting_2025}. Conversely, if a $Z$-like measurement with a certain neighbourhood has been deleted (which is always flow-preserving), then the same type of measurement can be reinserted with the same neighbourhood in a flow-preserving way.

A well-known combination of \eqref{LC} and \eqref{ZL} is the `pivot-and-delete' rule, which corresponds to applying a pivot and then deleting the resulting $YZ$-measurements:
\begin{equation}\label{PD}
	\input{pivot-del.tikz} \tag{PD}
\end{equation}
This is interpretation-preserving if $\alpha,\beta\in\{0,\pi\}$ and certain adjustments are made to the phase angles of the neighbouring vertices \cite[Lemma~5.3]{duncanGraph-theoretic2020}.
It is always flow-preserving in the `deletion' direction; the flow-preservation conditions for the `insertion' direction follow from two applications of Theorem~\ref{thm:planar-Z-like-insertion}.

\subsection{Derived flow-preserving rules}
\label{s:derived}

Building on the above rules we now derive some further useful transformations and the conditions under which they are flow-preserving.
Like pivoting, these transformations will be convenient shorthands that preserve certain properties of the diagram that need not be preserved by the individual rules of Figure~\ref{fig:flow-preserving}.
The two rules of Lemmas~\ref{lem:toggle-edge} and~\ref{lem:merge-neighbours} in particular were previously used for generating computations with flow in Ref.~\cite{ewen_application_2025}, yet without a formal analysis of the conditions under which they preserve flow.

Throughout this section we consider labelled open graphs in which all measurements are planar, so the relevant flow property is gflow.
In a slight abuse of notation, we will extend the definition of the correction function to outputs via $c(v)=\emptyset$ for all $v\in O$.

First, we will need the following special case of Theorem~\ref{thm:planar-Z-like-insertion}, which is closely related (but not identical) to Theorem~6.3.1 of Ref.~\cite{mcelvanney_preservation_2025} and Corollary~5.6 of Ref.~\cite{backens_inserting_2025}.

\begin{corollary}\label{cor:simplified-planar-insertion}
 Let $\Gamma = (G, I, O, \lambda)$ be a labelled open graph which has gflow (i.e.\ all measurements are planar).
 Suppose $S = \{a,b\} \subseteq V$ are such that each of $a,b$ is either an output or $XY$-measured.
 Then a flow-preserving planar $Z$-like insertion with neighbourhood $S$ is possible if and only if there exists a focused Pauli flow $(c,\prec)$ for which no vertex is strictly between $a$ and $b$ in the induced partial order, i.e.\ $\not\exists u\in V : a\prec u\prec b \vee b\prec u\prec a$.

 Moreover, if $a\in c(b) \oplus b\in c(a)$, the newly-inserted vertex must be $XZ$-measured whereas if $a\notin c(b) \wedge b\notin c(a)$, the newly-inserted vertex must be $YZ$-measured\footnote{The combination $a\in c(b) \wedge b\in c(a)$ is not possible as this would imply a loop in the partial order.}.
\end{corollary}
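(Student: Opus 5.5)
The plan is to specialise Theorem~\ref{thm:planar-Z-like-insertion} to $S=\{a,b\}$ and to show that its three conditions reduce to the stated order condition, with $K$ essentially forced to be $c(a)\symd c(b)$. Throughout, all measurements are planar, so gflow and Pauli flow coincide. The $\Xlike$ vertices are exactly the $XY$-measured ones, and by the convention of this section $c(v)=\emptyset$ for outputs. Recall the focusing property of a focused gflow: for a non-output $v$, the set $\odds{G}{c(v)}$ meets the $XY$-measured non-outputs only in $v$ (if $\ld(v)=XY$), and $c(v)$ contains no $Z$-like vertex other than possibly $v$ itself. This is the row-wise content of $M_\Gamma C=\I$ in Definition~\ref{def:flow-demand}.

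\emph{Identifying $K$.} Set $K_0:=c(a)\symd c(b)$. By linearity of $\odds{G}{\cdot}$ and the focusing property, $K_0$ is focused over $\comp{O}\setminus(S\cap\Xlike)$, and that set is the largest over which $K_0$ is focused. So $K_0$ satisfies condition~\ref{it:K-focused} of Theorem~\ref{thm:planar-Z-like-insertion}. Conversely, take any $K$ satisfying condition~\ref{it:K-focused}. Then $K\symd K_0$ lies in the kernel of $M_\Gamma$ (as a column vector on $\comp{I}$). My approach is to absorb such a kernel element into the flow itself: add it to the correction set of a suitably chosen vertex ($a$ or $b$), obtaining another focused gflow for which $K$ becomes $c(a)\symd c(b)$. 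The acyclicity of $N_\Gamma C$ must be re-checked for this modified flow. This is why the statement quantifies over \emph{some} focused flow $(c,\prec)$.

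\emph{Conditions~\ref{it:K-S-even} and~\ref{it:order}.} For condition~\ref{it:K-S-even}, note that an $XY$-vertex never lies in its own correction set, since $N_{v,v}=1$ would then give a loop in $N_\Gamma C$. For outputs, $c(o)=\emptyset$. Hence $\abs{S\cap K_0}\equiv [a\in c(b)]+[b\in c(a)] \pmod 2$, which yields the $XZ$/$YZ$ dichotomy stated in the corollary.

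For condition~\ref{it:order}, I would unpack the two sets:
\begin{itemize}
\item $w\in\Wpred$ means $c(w)$ contains exactly one of $a,b$, so $w\prec a$ or $w\prec b$ via the induced order.
\item Elements of $\Vsucc$ lie in $K_0\cup(\odds{G}{K_0}\symd S)$. Using focusing again, these are planar vertices that are corrected by, or appear in odd neighbourhoods of, $c(a)$ or $c(b)$, and hence are successors of $a$ or $b$ in the induced order.
\end{itemize}
With this, a violation $v\preceq w$ with $v\in\Vsucc$, $w\in\Wpred$ gives a chain $a\prec v\preceq w\prec b$ (or the symmetric one), i.e.\ a vertex strictly between $a$ and $b$. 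For the converse, given $u$ with $a\prec u\prec b$, I would take the covering chain through $u$ and show it produces an actual witness pair in $\Vsucc\times\Wpred$. If necessary I would replace $\prec$ by the induced partial order of a possibly modified flow, so that every covering relation comes from a single correction set.

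\emph{Main obstacle.} The expected difficulty is this converse direction, together with cancellations in the symmetric differences. A vertex may lie in both $c(a)$ and $c(b)$, or in both of their odd neighbourhoods, and then drop out of $K_0$ or $\Vsucc$. Such a vertex could witness ``betweenness'' without appearing in $\Vsucc$. My first attempt would be to argue via the layering of Definition~\ref{def:layering}: choose a between-vertex $u$ that is minimal (or maximal) with this property, and show that it, or a neighbour on the chain, survives the cancellation. Failing that, I would use the freedom in choosing the flow, as in the first step, to eliminate the cancellation.
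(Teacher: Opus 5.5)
Your outline follows the paper. You fix $K=c(a)\symd c(b)$, verify conditions~1 and~2 via focusing, and show that condition~3 is equivalent to there being no vertex strictly between $a$ and $b$, reading $\Wpred$ as predecessors and $\Vsucc$ as successors of $a$ or $b$. Your forward direction matches the paper's; the same-endpoint cases such as $a\prec v\preceq w\prec a$ are simply cycles.

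\textbf{The gap is the converse,} which you leave open as the `main obstacle'. The missing observation is that acyclicity already rules out the cancellations you worry about, so you need neither a layering argument nor a change of flow. The paper's argument runs as follows.
\begin{itemize}
\item Take a chain from $a$ to $u$ in the DAG $N_\Gamma C$ and let $u'$ be its first vertex after $a$. Then $u'\in(c(a)\cup\odds{G}{c(a)})\setminus\{a\}$ and $u'\preceq u\prec b$.
\item If $u'$ also lay in $(c(b)\cup\odds{G}{c(b)})\setminus\{b\}$, then $b\prec u'\prec b$, a cycle. Hence $u'$ is not cancelled and lies in $K\cup(\odds{G}{K}\symd S)$.
\item $u'$ is not an input: inputs lie in no correction set and, by focusing, in no odd neighbourhood other than their own. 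It is not an output, since outputs are maximal. So $u'\in\Vsucc$.
\item Symmetrically, let $u''$ be the last vertex before $b$ on a chain from $u$ to $b$, so $b\in c(u'')$. Since $a\prec u''$, we get $a\notin c(u'')$, so $u''\in\Wpred$.
\item Now $u'\preceq u\preceq u''$ violates condition~3.
\end{itemize}
Choosing $u$ minimal, as you suggest, just makes $u=u'$; you would still need exactly this observation.

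\textbf{Your step `Identifying $K$' has no counterpart in the paper,} which simply works with $K=c(a)\symd c(b)$. The concern behind it is genuine when $\ker M_\Gamma\neq 0$. Take the path $a - u - b - o_b$ with an extra output $o$ adjacent to $a$ and $b$, $I=\emptyset$, and all non-outputs $XY$. The flow $c(a)=\{u,o_b\}$, $c(u)=\{b\}$, $c(b)=\{o_b\}$ has $a\prec u\prec b$, yet $K=\{o\}$ satisfies all three conditions for this flow, with $\Vsucc=\emptyset$. Here, absorbing $\{u,o\}$ into $c(a)$ does give a flow with nothing between $a$ and $b$.

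But your fix is not a proof. The acyclicity check you defer is its entire content, and the normalisation cannot hold in the generality you state. Suppose $a,b\in O$ and $a\notin I$ has no neighbours. Then $K=\{a\}$ satisfies condition~1, has $\Vsucc=\emptyset$ and $\abs{S\cap K}=1$, so an $XZ$-insertion is flow-preserving. Yet $c(a)\symd c(b)=\emptyset$ for every flow, and there is no correction set to absorb into. So the $XZ$/$YZ$ dichotomy is a statement about the particular choice $K=c(a)\symd c(b)$, which is how the paper's proof obtains it. It is not a property of every admissible $K$.
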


\begin{proof}
  Take $K = c(a) \symd c(b)$ in Theorem~\ref{thm:planar-Z-like-insertion}, then condition~\ref{it:K-focused} (focusing) is satisfied.
Similarly, we satisfy either condition~\ref{it:K-S-even} of the $YZ$-insertion case or of the $XZ$-insertion case, as described in the corollary statement.
It remains to consider condition~\ref{it:order}.

Straightforwardly, the definition of $\Wpred$ reduces to $\Wpred := \{w\in\overline{O}: a \in c(w) \oplus b\in c(w) \}$.
For $\Vsucc$, we have
\begin{align*}
	\Vsucc &= \planar\cap \big((c(a) \symd c(b)) \cup (\odds{G}{c(a) \symd c(b)} \symd \{a,b\})\big) \\
	&= \planar\cap \big((c(a) \symd c(b)) \cup (\odds{G}{c(a)} \symd \{a\} \symd \odds{G}{c(b)} \symd \{b\})\big)
\end{align*}
thus $v\in\Vsucc$ implies $v\in c(a)\cup (\odds{G}{c(a)} \symd \{a\})$ or $v\in c(b)\cup (\odds{G}{c(b)} \symd \{b\})$.\footnote{The implication does not generally hold in the other direction as an element of $c(a)$ or $\odds{G}{c(a)}$ could be a boundary vertex, or some element could appear in both $c(a)$ and $c(b)$, in which case it need not be in $\mathcal{V}_{\mathrm{succ}}$. Yet the given implication is all we need.}
More concisely, as indicated by the `successor' part of the name $\Vsucc$, we have  $v\in\mathcal{V}_{\mathrm{succ}} \implies a\prec v \vee b\prec v$.
Similarly, $w\in\mathcal{W}_{\mathrm{pred}}$ means $w\prec a \vee w\prec b$.

Now suppose there exists a pair $v\in\mathcal{V}_{\mathrm{succ}}, w\in\mathcal{W}_{\mathrm{pred}}$ such that $v\prec w \vee v = w$.
There are four cases:
\begin{enumerate}
 \item $a\prec v, w\prec a$ which implies $w\prec a \prec v$ so $\neg (v\prec w \vee v = w)$ because $\prec$ is a partial order, contradicting the assumption $v\prec w \vee v = w$.
 \item $a\prec v, w\prec b$: then $v=w$ implies $a\prec v=w\prec b$ and $v\prec w$ implies $a\prec v\prec w\prec b$. In both cases, there exists some vertex strictly between $a$ and $b$ in the partial order.
 \item $b\prec v, w\prec a$: this is symmetric to case~2.
 \item $b\prec v, w\prec b$: this is symmetric to case~1.
\end{enumerate}
Hence if condition~\ref{it:order} is not satisfied, there is a vertex that sits strictly between $a$ and $b$ in the partial order.

Conversely, suppose there exists $u$ such that $a\prec u\prec b$.
Then there must be a vertex $u'$ with $u' = u \vee u'\prec u$ such that $u'\in c(a)\cup\odds{G}{c(a)}\setminus\{a\}$.
Similarly, there must be a vertex $u''$ with $u = u'' \vee u\prec u''$ such that $b\in c(u'')$.
Moreover, $u'\notin c(b)\cup\odds{G}{c(b)}\setminus\{b\}$ since $u'\prec b$; $u'\notin I$ since inputs do not appear in correction sets (by definitions) or in odd neighbourhoods (by focusing); and $u'\notin O$ since outputs are maximal in the induced partial order.
Therefore $u'\in\mathcal{V}_{\mathrm{succ}}$.
Similarly, $a\notin c(u'')$ since $a\prec u''$; therefore $u''\in\mathcal{W}_{\mathrm{pred}}$.
Thus, if there exists some vertex strictly between $a$ and $b$ in the partial order, then condition~3 of Theorem~\ref{thm:planar-Z-like-insertion} is false, as witnessed by the pair $u''\in\mathcal{W}_{\mathrm{pred}}$ and $u'\in\mathcal{V}_{\mathrm{succ}}$, which satisfy $u'=u'' \vee u'\prec u''$. 
The argument is analogous if instead $b\prec u\prec a$.

Combining the two directions, we find that insertion is possible if and only if there exists no vertex $u$ that comes strictly between $a$ and $b$ in the partial order.
\end{proof}

It can be useful to work specifically with labelled open graphs where all measurements are $XY$.
Moreover, one-way computations using only $XY$-measurements are universal \cite{danos_measurement_2007}.
The property of having only $XY$ measurements is not generally preserved by \eqref{LC} or \eqref{ZL}, but we show several ways of composing these rules so as to effect non-trivial transformations that preserve both flow and the property of having all measurement labels be $XY$.

First, the `edge toggling rewrite rule' either removes or inserts an edge between two $XY$-measured vertices.
This rule is flow-preserving if the edge is not required by the gflow (in the sense that one vertex is in the correction set of the other) and if the two vertices are `close together' in the partial order associated with the gflow.

\begin{lemma}\label{lem:toggle-edge}
 Let $\Gamma = (G, I, O, \lambda)$ be a labelled open graph which has gflow.
 Suppose $S = \{a,b\} \subseteq V$ are such that each of $a,b$ is either an output or $XY$-measured.
 If there exists a focused gflow $(c,\prec)$ for which $a\notin c(b) \wedge b\notin c(a)$ and no vertex is strictly between $a$ and $b$ in the partial order, then we can toggle the edge between $a$ and $b$ while preserving the existence of flow.

 In particular, an edge $\{a,b\}$ can always be toggled while preserving the existence of flow if one of the following conditions holds:
 \begin{itemize}
  \item both endpoints are inputs, i.e.\ $\{a,b\}\subseteq I$, or
  \item both endpoints are outputs, i.e.\ $\{a,b\}\subseteq O$, or
  \item at least one endpoint is a simultaneous input and output, i.e.\ $a\in I\cap O \vee b\in I\cap O$.
 \end{itemize}
\end{lemma}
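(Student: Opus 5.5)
The edge will be toggled by a flow-preserving sequence of basic rules: insert an auxiliary vertex, pivot, then delete. Under the hypotheses, Corollary~\ref{cor:simplified-planar-insertion} applies with $S=\{a,b\}$. Since $a\notin c(b)\wedge b\notin c(a)$ and no vertex lies strictly between $a$ and $b$, we can insert a new $YZ$-measured vertex $z$ with neighbourhood $\{a,b\}$ via the right-to-left direction of \eqref{ZL}, and gflow is preserved.

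Next we want to turn $z$ into something that touches the edge $ab$. A local complementation about $z$ is allowed by \eqref{LC}, because $z$ is not an input. It complements the edge between $a$ and $b$, as these are exactly the neighbours of $z$. It also changes the measurement labels of $z$, $a$ and $b$: the $YZ$-measured $z$ becomes $XZ$ (or, depending on conventions, $XY$), and $a$ and $b$ (if $XY$-measured) acquire a label change of the form $XY\mapsto XZ$.

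This label change is the main obstacle. The goal is to end with all measurements $XY$ and $z$ removed, so a plain local complementation about $z$ is probably not the right tool. I would therefore apply three local complementations about $z$ in total (i.e.\ $LC_z$ once in the inverse direction, using four-fold triviality). Alternatively, I would look for a combination in which a second $LC$ restores the labels of $a$ and $b$. Concretely, I would check how $LC_z$ acts on the labels of neighbours in the Pauli-flow convention of \cite{backens_there_2021,simmons_relating_2021-1}. For a $YZ$ (X-spider) vertex, local complementation leaves the neighbour labels $XY$ unchanged, since the neighbours only gain $Z$-phase Cliffords. It toggles the neighbour edges, and the central vertex becomes $XZ$. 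If that is right, then after $LC_z$ we have $a,b$ still $XY$, the edge $ab$ toggled, and $z$ an $XZ$-measured vertex. Deleting $z$ by the left-to-right direction of \eqref{ZL}, which is always flow-preserving, then yields exactly $G$ with $ab$ toggled, with gflow and all-$XY$ labels intact. Outputs among $a,b$ cause no trouble, since we ignore phases.

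For the special cases, it suffices to exhibit a focused gflow meeting the hypotheses. If $a,b\in O$, both are maximal and have empty correction sets, so neither is in the other's correction set. Nothing lies strictly between two maximal elements, since such an element would have to be above one of them. If $a,b\in I$, inputs never appear in correction sets and are minimal in the induced order, so again nothing lies between them. If, say, $a\in I\cap O$, then $a$ is both minimal and maximal, hence incomparable with everything, and $c(a)=\emptyset$ with $a\notin c(b)$ because $a$ is an input. The only remaining thing to check is that the corollary's requirement that $a,b$ be outputs or $XY$-measured holds here. It does, since all measurements in these special cases are planar and may be taken $XY$ under the standing assumption of this section, or else the input endpoints are $XY$ by the convention for \eqref{IO}.
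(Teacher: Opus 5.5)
Your argument is correct and matches the paper's proof. The paper likewise inserts a $YZ$-measured $z$ with neighbourhood $\{a,b\}$ via Corollary~\ref{cor:simplified-planar-insertion}, then applies one local complementation about $z$, which toggles $ab$ and leaves the $XY$ labels of $a,b$ unchanged, then performs a $Z$-like deletion; the special cases also rest on the same observations about minimality, maximality and correction sets. One harmless slip: local complementation about $z$ keeps the centre's $YZ$ label as $YZ$ (only $XY$ and $XZ$ swap at the centre), not $XZ$, but since deleting any $Z$-like vertex always preserves flow, the argument is unaffected.
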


\begin{proof}
 By Corollary~\ref{cor:simplified-planar-insertion}, the $YZ$-insertion in the first step below is flow-preserving under the conditions of the lemma:
 \[
  \input{insertion0.tikz}
  \quad\overset{\eqref{ZL}}{\to}\quad \input{insertion-YZ1.tikz}
  \quad\overset{\eqref{LC}}{\to}\quad \input{insertion-YZ2.tikz}
  \quad\overset{\eqref{ZL}}{\to}\quad \input{insertion-YZ3.tikz}
 \]
 The second step uses a local complementation and the third step uses $YZ$-deletion; both always preserve the existence of flow.
 The argument is analogous if initially there is no edge $\{a,b\}$.

 Note that while we have drawn measurement effects for both $a$ and $b$ and we have not specifically drawn input wires, both vertices may be inputs and/or outputs.
 Inputs are mutually incomparable and do not appear in correction sets, so the conditions of the lemma are trivially satisfied.
 Similarly, outputs are mutually incomparable and have empty correction sets, so again the conditions of the lemma are trivially satisfied.

 Finally, if one of the vertices (without loss of generality $a$) is a simultaneous input and output, i.e.\ $a\in I\cap O$, then $a$ does not appear in correction sets and it has an empty correction set.
 Therefore, by Remark~\ref{rem:outputs-partial-order}, $a$ is incomparable to all vertices and the conditions of the lemma are satisfied.
\end{proof}

Secondly, it is possible to merge two neighbouring qubits in a gflow-preserving way if one of them is in the correction set of the other, at least one of the two vertices is internal, and they are not separated in the partial order of the gflow.

\begin{lemma}\label{lem:merge-neighbours}
 Let $\Gamma = (G, I, O, \lambda)$ be a labelled open graph which has gflow.
 Suppose $S = \{a,b\} \subseteq V$ where $a\sim b$, each of $a,b$ is either an output or $XY$-measured, and one of the two vertices is internal, i.e.\ $\{a,b\}\setminus (I\cup O) \neq \emptyset$.
 If there exists a focused gflow $(c,\prec)$ for which $a\in c(b) \oplus b\in c(a)$ and no vertex is strictly between $a$ and $b$ in the partial order, we can merge vertices $a$ and $b$ while preserving the existence of gflow.
\end{lemma}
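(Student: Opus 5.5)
The plan mirrors the proof of Lemma~\ref{lem:toggle-edge}: a short chain of rules from Figure~\ref{fig:flow-preserving}, in which only the first step is conditional. By Corollary~\ref{cor:simplified-planar-insertion}, the hypotheses ($a,b$ outputs or $XY$-measured, $a\in c(b)\oplus b\in c(a)$, nothing strictly between $a$ and $b$) are exactly what is needed to insert, flow-preservingly, a new vertex $z$ with neighbourhood $\{a,b\}$. By the ``moreover'' part of that corollary, $z$ must be $XZ$-measured. So the first step is the right-to-left application of \eqref{ZL}, inserting an $XZ$-vertex $z$ adjacent to both $a$ and $b$. Together with the existing edge $a\sim b$, this creates a triangle $a,b,z$.

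Next I would remove $z$ again using only unconditional rules, so that the net effect is the merge. Local complementation about $z$ toggles the edge $ab$, so the edge is deleted. Because $z$ is $XZ$-measured, it becomes $YZ$ under this \eqref{LC} step, and the labels of $a$ and $b$ change by local Cliffords. We may ignore phases, so only labels matter.

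The intended endpoint is that one of $a,b$, say the internal one (which exists by hypothesis), is turned into a $Z$-like measurement whose neighbourhood is exactly the merged neighbourhood. Deleting it by \eqref{ZL} left-to-right is always flow-preserving. I would look for the precise sequence among \eqref{LC} about $z$, \eqref{LC} about $a$, and the pivot \eqref{pivot} on an edge such as $za$. A pivot on $za$ exchanges the roles of $z$ and $a$ and complements edges between their neighbourhoods. The expected outcome is that the internal vertex becomes $YZ$ or $Z$, and $z$ sits in its place adjacent to $N(a)\symd N(b)$, or a similar combination. At that point \eqref{ZL} deletion, and possibly one further \eqref{LC} to restore an $XY$ label on the survivor, yields the merged graph. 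Each of these steps is unconditional: \eqref{LC} about a non-input, pivots, and $Z$-like deletion.

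The main obstacle is this bookkeeping. I need to choose the central vertices so that no local complementation is centred on an input, which is where ``one of $a,b$ is internal'' is used: we complement about $z$ and the internal vertex only. I also need every remaining measurement label to return to $XY$, and the final neighbourhood of the merged vertex to be the intended one. I would settle this by drawing the four-vertex picture ($a$, $b$, $z$, generic neighbours) and tracking label changes through each \eqref{LC} using the standard rule: the centre's plane changes $XY\leftrightarrow XZ$, and its neighbours change $XY\leftrightarrow YZ$ and $XZ\leftrightarrow XZ$, up to phase. If the merged vertex ends up with a non-$XY$ planar label, I would apply one more \eqref{LC} about it to fix the label, checking that the edges this toggles are cancelled by the pivot. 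This is the step where the exact sequence may need adjusting.
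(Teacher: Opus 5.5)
Your outline is the paper's. The hypotheses feed Corollary~\ref{cor:simplified-planar-insertion} in its $XZ$ case, giving an $XZ$-measured $z$ adjacent to $a$ and $b$, followed by \eqref{LC} about $z$, a pivot, and $Z$-like deletions. Once the corollary is invoked, however, the explicit unconditional rewrite sequence \emph{is} the proof. You leave that sequence open, and the facts you would use to find it are wrong.

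\begin{itemize}
\item \textbf{Neighbour rule.} Under a local complementation, neighbours receive a $Z$-axis quarter rotation, so $XY$ is fixed and $XZ\leftrightarrow YZ$. It is not $XY\leftrightarrow YZ$. This is why $a,b$ stay $XY$ in the proof of Lemma~\ref{lem:toggle-edge}.
\item \textbf{Centre rule.} The centre $z$ goes $XZ\mapsto XY$, not to $YZ$; your claim contradicts even the centre rule you state.
\item \textbf{Endpoint.} Deleting a vertex ``whose neighbourhood is exactly the merged neighbourhood'' discards that neighbourhood. A single deletion with $z$ sitting adjacent to $N(a)\symd N(b)$ leaves an extra vertex. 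The merged neighbourhood must land on the \emph{non-internal} vertex, since it may be an input or output and cannot be deleted.
\item \textbf{Extra \eqref{LC}.} For the same reason, your optional \eqref{LC} ``on the survivor'' is unnecessary, and it would break flow if the survivor is an input.
\end{itemize}

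The correct completion is short. After \eqref{LC} about $z$:
\begin{itemize}
\item the edge $ab$ is removed, $z$ is $XY$, and $a,b$ keep their labels, so $z$ is an $XY$-vertex on a path $a\sim z\sim b$.
\end{itemize}
Assume without loss of generality that $b$ is internal, and pivot on $zb$ (both $XY$, neither an input). Relative to this edge, $a$ is the only vertex adjacent to $z$ alone, $N_G(b)\setminus\{a\}$ is the set adjacent to $b$ alone, and there are no common neighbours. So the pivot:
\begin{itemize}
\item toggles every edge between $a$ and $N_G(b)\setminus\{a\}$;
\item swaps $z$ and $b$, giving $N(b)=\{a,z\}$ and $N(z)=(N_G(b)\setminus\{a\})\cup\{b\}$;
\item turns both $z$ and $b$ into $YZ$, and leaves the label of $a$ unchanged.
\end{itemize}
Two unconditional $YZ$-deletions then remove $b$ and $z$. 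What remains is $a$, with its original boundary status or $XY$ label and neighbourhood
\[
(N_G(a)\setminus\{b\})\symd(N_G(b)\setminus\{a\}) = (N_G(a)\symd N_G(b))\setminus\{a,b\}.
\]
No further \eqref{LC} is needed. The internality hypothesis is used exactly here: $b$ cannot be an input, because it is a pivot centre. Nor can it be an output, because it must end up as a $YZ$-measured vertex that \eqref{ZL} can delete.
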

\begin{proof}
 Without loss of generality assume $b\notin I\cup O$, i.e.\ $b$ is internal, otherwise swap the designations of $a$ and $b$.
 By Corollary~\ref{cor:simplified-planar-insertion}, the $XZ$-insertion in the first step below is flow-preserving under the conditions of the lemma:
 \begin{equation}\label{merge}
  \input{insertion0.tikz}
  \quad\overset{\eqref{ZL}}{\to}\quad \input{insertion-XZ1.tikz}
  \quad\overset{\eqref{LC}}{\to}\quad \input{insertion-XZ2.tikz}
  \quad\overset{\eqref{pivot}}{\to}\quad \input{insertion-XZ3.tikz}
  \quad\overset{\eqref{ZL}}{\to}\quad \input{insertion-XZ4.tikz}
 \end{equation}
 The second step uses a local complementation, the third step uses pivoting, and the last step uses $YZ$-deletions, each of which always preserve the existence of flow.
 The neighbours of the remaining vertex at the end are exactly $(N_G(a)\symd N_G(b))\setminus\{a,b\}$.
 Note that while we have drawn measurement effects for both $a$ and $b$ and we have not drawn input wires, $a$ may be an input or output in the above argument.
\end{proof}

\begin{remark}\label{rem:vertex-splitting}
	The flow-preservation conditions for the reverse of the merge rule are somewhat more complicated, as the final step of \eqref{merge} becomes two $Z$-like insertions instead of deletions.
	Indeed, with slightly different phase labels (which, in the gflow setting, are irrelevant to the question of whether there is flow), the reverse of the final two steps of \eqref{merge} is known as `vertex splitting' in the literature:
	\ctikzfig{vertex-splitting}
	The exact conditions for flow-preservation of vertex splitting can be found in Theorem~5.5 of Ref.~\cite{backens_inserting_2025}; they follow from the $Z$-like insertion conditions.
	The remaining steps of the right-to-left direction of \eqref{merge} are a local complementation and a $Z$-like deletion, which are unconditionally flow-preserving.
\end{remark}

We will consider a useful special case of the reverse of \eqref{merge}, applying to outputs, in Proposition~\ref{prop:output-splitting}.

In the above lemma, if instead $\{a,b\}\sse I\cup O$, then since one must be in the other's correction set, we can assume without loss of generality that $a\in I, b\in O$.
In this situation, we cannot apply the pivot operation that is part of the derivation \eqref{merge}.
Nevertheless, if one of $a,b$ has a neighbourhood of size~1, a similar operation can be performed using \eqref{IO}.

The following result was proved using a different rule set that preserves both flow and interpretation, but it uses only rules \eqref{ZL} and \eqref{LC}, so it can be applied in our setting as well.
We only need the special case where $a$ and $b$ are both outputs.

\begin{lemma}[\cite{backens_completeness_2026}]\label{lem:insert-cnot}
	Let $\Gamma = (G,I,O,\ld)$ be a labelled open graph which has gflow.
	Suppose $a,b\in O$ are not adjacent to each other, then the following operation is flow-preserving, where $N_a := N_G(a)$ is the neighbourhood of $a$, $N_b := N_G(b)$ is the neighbourhood of $b$, and the two new vertices are both $XY$-measured:
	\[
		\input{symdi0.tikz} \quad\to\quad \input{symdi3b.tikz}
	\]
\end{lemma}

\begin{lemma}\label{lem:symmetric-difference}
	Let $\Gamma = (G,I,O,\ld)$ be a labelled open graph which has gflow.
	Suppose $a,b\in O$ are not adjacent to each other.
	Then replacing the neighbourhood of $a$ by the symmetric difference $N_G(a)\symd N_G(b)$ preserves the existence of gflow.
\end{lemma}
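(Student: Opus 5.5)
The plan is to obtain Lemma~\ref{lem:symmetric-difference} from Lemma~\ref{lem:insert-cnot} followed by a flow-preserving clean-up using Lemmas~\ref{lem:merge-neighbours} and~\ref{lem:toggle-edge}. I read the picture in Lemma~\ref{lem:insert-cnot} as follows: $a$ and $b$ are detached from their old neighbourhoods and replaced by a small gadget of two new $XY$-measured vertices $u,v$. This gadget effectively implements a CNOT between the $a$- and $b$-wires, so one new vertex carries the neighbourhood $N_a\symd N_b$ and the other carries $N_b$, with the outputs $a,b$ hanging off the gadget.

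Step one applies Lemma~\ref{lem:insert-cnot}, which is flow-preserving by hypothesis since $a,b\in O$ are non-adjacent. Step two removes the superfluous structure so that $a$ ends up directly attached to $N_a\symd N_b$ and $b$ to $N_b$.

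To do this, I would identify the new vertex, say $u$, whose neighbourhood is $(N_a\symd N_b)\cup\{a,\dots\}$, with $a$ being its only output neighbour. In the natural focused gflow $u$ has $c(u)=\{a\}$, so $a\in c(u)$, and $u\in V_1^\prec$ directly below $a$; hence no vertex lies strictly between them. Lemma~\ref{lem:merge-neighbours} then merges $u$ into $a$. The merged vertex has neighbourhood $(N(u)\symd N(a))\setminus\{u,a\}$, which should be $N_a\symd N_b$ together with possibly some edges inside the gadget. The same merge, applied to $v$ and $b$, restores $N_b$ on $b$.

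Any leftover edge between $a$ and $b$ joins two outputs. By Lemma~\ref{lem:toggle-edge} such an edge can always be toggled, so I remove it, leaving exactly the graph described in the statement.

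The main obstacle is the second step. I must verify that the gadget produced by Lemma~\ref{lem:insert-cnot} admits a focused gflow in which the relevant pairs are adjacent in the order with the required correction-set relation, and that after the first merge the second pair still satisfies this. To check this I would use the algebraic criterion of Theorem~\ref{thm:algebraic-Pauli}: write down the correction matrix of $\Gamma$ extended by $c(u)=\{a\}$ and $c(v)=\{b\}$ (or $\{a,b\}$ as forced), and confirm that $u,v$ are maximal among non-outputs.

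If the gadget's geometry makes one merge problematic, for instance because $u\sim v$ puts $v$ between $u$ and $b$, I would first toggle the $u$--$v$ edge. Lemma~\ref{lem:toggle-edge} allows this, provided neither vertex lies in the other's correction set and both sit in layer $V_1^\prec$.
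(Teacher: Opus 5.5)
Your skeleton matches the paper's: apply Lemma~\ref{lem:insert-cnot}, then simplify, finishing with Lemma~\ref{lem:toggle-edge} on an edge between two outputs. The clean-up step, however, has a genuine gap, because it rests on a shape of the gadget that does not match what Lemma~\ref{lem:insert-cnot} produces.

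That lemma preserves the interpretation. Moreover, $\mathrm{CNOT}_{a\to b}$ (control $a$) maps the graph state of $G$ to that of the graph $G'$ in which $a$ has neighbourhood $N':=N_a\symd N_b$. So, judging from this and from the paper's own clean-up, the gadget is $G'$ followed by the standard CNOT pattern on the target wire $b$:
\begin{itemize}
\item the output $a$ itself has neighbourhood $N'\cup\{m\}$;
\item the two new $XY$-vertices are $b_1$, in $b$'s old place, with neighbourhood $N_b\cup\{m\}$, and $m$ with neighbourhood $\{a,b_1,b\}$;
\item $b$'s only neighbour is $m$.
\end{itemize}
So there is no new vertex with neighbourhood $N'\cup\{a,\dots\}$ to merge into $a$, and no new vertex adjacent to $b$ that carries $N_b$. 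The only new neighbour of $a$ is $m$, and merging $m$ into $a$ leads away from the target. It leaves $b$ adjacent only to $a$, and after toggling that edge $b$ is isolated.

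The missing observation is that no gflow has to be exhibited at all, so the step you flag as the main obstacle disappears. In the gadget the output $b$ has a single neighbour. \eqref{IO} contraction, which is exactly a merge of an output with its unique neighbour, is unconditionally flow-preserving. The paper proceeds in three steps:
\begin{itemize}
\item apply \eqref{IO} to $b$, absorbing $m$, so that $b\sim\{a,b_1\}$;
\item apply Lemma~\ref{lem:toggle-edge} to the edge $\{a,b\}$ (both endpoints are outputs), which leaves $b$ with the single neighbour $b_1$;
\item apply \eqref{IO} again, which gives $b$ the neighbourhood $N_b$ while $a$ keeps $N'$.
\end{itemize}

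Your merge-based route can be repaired by following this order. The sets $c(m)=\{b\}$ and later $c(b_1)=\{b\}$ are focused correction sets consisting of a single output, since $\odd{\{b\}}$ meets the non-outputs only in the vertex being corrected. Assigning them in any focused gflow therefore places each vertex directly below $b$ with nothing in between. Even so, Lemma~\ref{lem:merge-neighbours} is then only a heavier way of performing the two \eqref{IO} steps. It also does not cover the case $b\in I$. There $b_1$ carries the input, and Lemma~\ref{lem:merge-neighbours} requires one of the two vertices to be internal, whereas \eqref{IO} merges an input with an output directly.
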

\begin{proof}
	This follows by simplifying the end result of Lemma~\ref{lem:insert-cnot} in non-interpretation-preserving ways.
	To save space, write $N' := N_a\symd N_b$.
	\[
		\input{symdi0.tikz}
		\quad\overset{\ref{lem:insert-cnot}}{\to}\quad \input{symdi3c.tikz}
		\quad\overset{\eqref{IO}}{\to}\quad \input{symdi4.tikz}
		\quad\overset{\ref{lem:toggle-edge}}{\to}\quad \input{symdi5.tikz}
		\quad\overset{\eqref{IO}}{\to}\quad \input{symdi6.tikz}
	\]
	The application of Lemma~\ref{lem:toggle-edge} is flow-preserving as the endpoints of the edge are both outputs.
\end{proof}

\section{Completeness}
\label{s:completeness}

We now show that any MBQC pattern with Pauli flow (or any ZX-diagram with Pauli flow) can be brought into a trivial normal form in a flow-preserving way.
This then implies that every diagram can be generated from this trivial normal form.

\begin{definition}
 We say a labelled open graph $(G,I,O,\ld)$ is \emph{trivial} if $V=O$ and $E=\emptyset$.
 Similarly, a ZX-diagram is \emph{trivial} if it consists of a collection of wires that `inject' the set of inputs into the set of outputs, and if furthermore every output not connected to an input is instead connected to a unique degree-1 green spider.
\end{definition}

A trivial labelled open graph trivially has Pauli flow since every vertex is an output.
The trivial ZX-diagram is unique only up to a permutation of the outputs.
We now show that this is not a problem as it is possible permute the outputs arbitrarily using the rules of Figure~\ref{fig:flow-preserving}.

\begin{lemma}\label{lem:permute-outputs}
	Let $\Gamma = (G, I, O, \lambda)$ be a labelled open graph where every vertex is an output and there are no edges, i.e.\ with $G=(V,E)$ we have $V=O$ and $E=\emptyset$.
	Suppose that the inputs and outputs are separately ordered (as is the case in a ZX-diagram).
	Then it is possible permute the outputs arbitrarily while preserving the existence of flow.
\end{lemma}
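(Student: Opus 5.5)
The plan is to realise every transposition of two outputs and then use the fact that transpositions generate the symmetric group. Since $V=O$ and $E=\emptyset$, each output $o$ is either also an input ($o\in I\cap O$, a bare wire in the ZX-diagram) or a non-input output (a wire ending in a degree-1 green spider). Two non-input outputs are indistinguishable, so exchanging them needs no rewriting. It remains to handle two cases: (i) two outputs $a,b\in I\cap O$, and (ii) one output $a\in I\cap O$ and one output $b\in O\setminus I$.

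First I will apply the input part of \eqref{IO} to each output that is also an input. For $a\in I\cap O$ this creates a new $XY$-measured input vertex $i_a$ adjacent to $a$, and $a$ becomes an output only, so $N_G(a)=\{i_a\}$. The ``routing'' of inputs to outputs is now recorded in the output neighbourhoods. I will then change these neighbourhoods using Lemma~\ref{lem:symmetric-difference}. That lemma applies because $a$ and $b$ are outputs and remain non-adjacent throughout: the symmetric-difference operation only changes edges between an output and the non-output vertices $i_a,i_b$.

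In case (i), I will perform the XOR-swap sequence.
\begin{itemize}
 \item Replace $N(a)$ by $N(a)\symd N(b)=\{i_a,i_b\}$.
 \item Replace $N(b)$ by $N(b)\symd N(a)=\{i_a\}$.
 \item Replace $N(a)$ by $N(a)\symd N(b)=\{i_b\}$.
\end{itemize}
Each step preserves gflow by Lemma~\ref{lem:symmetric-difference}.

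In case (ii), $b$ initially has empty neighbourhood, and two steps suffice.
\begin{itemize}
 \item Replace $N(b)$ by $N(b)\symd N(a)=\{i_a\}$.
 \item Replace $N(a)$ by $N(a)\symd N(b)=\emptyset$.
\end{itemize}

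Finally, I will apply \eqref{IO} right-to-left to contract each path $i\!-\!o$, where $i$ is an input of degree 1 and $o$ is the output it is attached to. This returns the graph to the trivial form $V=O$, $E=\emptyset$, but with input $a$ now routed to output $b$ (and, in case (i), input $b$ routed to $a$). This is exactly the desired transposition of outputs in the ZX-diagram.

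The step I expect to be most delicate is the bookkeeping around \eqref{IO}. I need to check that the dangling-wire extension and contraction really take the form described above: a single new $XY$-measured input vertex on the path to the output, and, at the end, a degree-2 path from an input to an output that can be contracted. I also need to confirm that after the swap each $i_x$ again has exactly one neighbour, so that the reverse application of \eqref{IO} is available. Once this is in place, composing transpositions gives every permutation of the outputs, and every step preserves the existence of flow.
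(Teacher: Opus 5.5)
Your proof is correct, but it exchanges the wires by a different mechanism than the paper. Both arguments reduce to transpositions, and both put the real work between an unfusion and a refusion with \eqref{IO}.

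The paper then does the exchange in one structural move. It joins the two extended wires by an edge using Lemma~\ref{lem:toggle-edge}, pivots along that edge with \eqref{pivot}, and deletes the now $YZ$-measured pivot vertices with \eqref{ZL}. It removes the leftover unwanted edges with Lemma~\ref{lem:toggle-edge} before contracting with \eqref{IO}. The cases where one or both vertices are not inputs are handled by the same picture.

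You instead implement the graphical analogue of writing a SWAP as three CNOTs. If you regard output neighbourhoods as vectors over $\FF_2$, Lemma~\ref{lem:symmetric-difference} is a row addition, and three of them exchange two rows (two suffice when one row is zero).

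Your hypothesis checks are right:
\begin{itemize}
\item after the \eqref{IO} split, $a,b\in O\setminus I$;
\item they never become adjacent;
\item gflow is maintained inductively.
\end{itemize}
This is exactly how the paper itself applies Lemma~\ref{lem:symmetric-difference} in Step~\ref{step:symdi}, where outputs are adjacent only to inputs. The \eqref{IO} bookkeeping you flag is also fine. Splitting an element of $I\cap O$ into an $XY$-measured input adjacent to an output, and merging such a degree-one pair back, is how the paper uses \eqref{IO} in Step~\ref{step:IO}. Your observation that two non-input outputs can be swapped without rewriting is likewise correct.

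The trade-off is as follows. The paper's route is more elementary and economical, needing only the basic rules plus the toggle-edge lemma. Each of your three XOR steps is itself a multi-rule derivation resting on Lemma~\ref{lem:insert-cnot}, which is imported from other work. On the other hand, your route makes the linear-algebraic structure explicit. It also shows that the same moves realise arbitrary sequences of row operations on the output neighbourhoods, not just permutations.
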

\begin{proof}
	We show how to transpose two outputs.
	As any permutation can be built from transpositions, this implies the desired result.
	The proof is shown for two vertices in $I\cap O$, but it works the same way if one or both vertices are not inputs.
	\begin{align*}
		\input{permutation0.tikz}
		\quad\overset{\eqref{IO}}&{\to}\quad \input{permutation1.tikz}
		\quad\overset{\ref{lem:toggle-edge}}{\to}\quad \input{permutation2.tikz}
		\quad\overset{\eqref{pivot}}{\to}\quad \input{permutation3.tikz}
		\quad\overset{\eqref{ZL}}{\to}\quad \input{permutation4.tikz} \\
		\quad\overset{\ref{lem:toggle-edge}}&{\to}\quad \input{permutation5.tikz}
		\quad\overset{\eqref{IO}}{\to}\quad \input{permutation-final.tikz}
	\end{align*}
	It is straightforward to see that the conditions for Lemma~\ref{lem:toggle-edge} hold in both situations where it is used.
\end{proof}

\subsection{Trivialisation procedure}
\label{s:procedure}

Suppose $\Gamma = (G,I,O,\ld)$ is an arbitrary labelled open graph with Pauli flow.
Recall that if a vertex $u$ is a combined input and output, i.e.\ $u\in I\cap O$, then it can be disconnected from the rest of the graph while preserving flow using Lemma~\ref{lem:toggle-edge}.
In the following, we therefore assume that any vertices that do have neighbours are not in $I\cap O$.

Then $\Gamma$ can be made trivial in a flow-preserving way using the following five steps.

\begin{enumerate}
 \item \emph{Remove all Pauli measurements.}
  First, relabel all inputs to $XY$, this is flow-preserving as inputs do not appear in any correction sets (not even their own).
  Then:
  \begin{itemize}
   \item Delete all $Z$-measurements using $Z$-like deletion.
   \item Local complement and delete all $Y$-measurements\footnote{This is basically Ref.~\cite[Lemma~5.2]{duncanGraph-theoretic2020}.}.
   \item Note that every $X$-measurement has at least one neighbour (because it must be in the odd neighbourhood of its own correction set).
   If one of the neighbours is an internal vertex, pivot on this pair and then delete the former $X$-measurement (which has turned into $Z$).
   Otherwise, pick one of the boundary neighbours and extend its dangling wire so it is no longer a boundary; then proceed as before\footnote{This is fairly similar to Ref.~\cite[Lemma~5.3 and Equation~(7)]{duncanGraph-theoretic2020}, though we allow arbitrary measurement labels whereas in the reference, implicitly, everything is $XY$-labelled.}.
  \end{itemize}
  In this way, all Pauli measurements can be removed while preserving the existence of flow.
  The labelled open graph now has only planar measurements, i.e.\ it has gflow.

 \item \emph{Remove all $Z$-like planar measurements}, i.e.\ all $YZ$ and $XZ$ measurements, using flow-preserving $Z$-like deletion.
  Afterwards, the labelled open graph contains only $XY$-measurements.
  
 \item \emph{Merge internal vertices into outputs.}
 Suppose there exists at least one measured non-input vertex, otherwise move directly to the next step.
 Find a focused gflow $(g,\prec)$.
 
 Choose a measured non-input vertex $x$ that is maximal in $\prec$ (among the non-outputs), then $g(x)\sse O$.
 We can find $o\in g(x)$ such that $o\sim x$ since $x$ is $XY$-measured.
 Now, $S = \{o,x\}$ straightforwardly satisfies the conditions of Lemma~\ref{lem:merge-neighbours}, so we can merge the two vertices.
 Update the gflow accordingly.
 This step can be repeated as long as there there remain any measured non-input vertices.
 
 \item \emph{Merge inputs with outputs.}
 Now $I\cup O = V$ and by Lemma~\ref{lem:toggle-edge} we may assume the graph is bipartite with partition $I\setminus O$ and $\overline{I\setminus O}$; this property will be preserved throughout this step.
 (We continue to ignore elements of $I\cap O$ without loss of generality.)
 
 If $I\sse O$ (so the first part is empty) and $E=\emptyset$, the labelled open graph is trivial and we are done.
 Otherwise, find a focused gflow $(g,\prec)$.
 Note that this gflow must have depth 1: inputs do not appear in correction sets and every vertex is an input or an output, so every non-output is corrected using only output vertices.
 Moreover, focusing implies that the $XY$-measured input vertices do not appear in odd neighbourhoods of other vertices' correction sets.
 This means there can be no vertex that appears between two others in the partial order.
 Thus:
 \begin{enumerate}
 	\item\label{step:toggle} If there exists a pair of neighbours $i\in I, o\in O$ such that $o\notin g(i)$, then the edge $i\sim o$ can be removed by Lemma~\ref{lem:toggle-edge}.
 	
 	\item\label{step:IO} If there exists an input or an output with a single neighbour, this vertex and its neighbour can be merged using \eqref{IO}.
 	Afterwards, any edges connecting the merged vertex (which is now a simultaneous input and output) to any other vertex can be removed using Lemma~\ref{lem:toggle-edge}.
 	The vertex can then be removed from consideration.
 	
 	\item\label{step:symdi} If neither of the above hold, pick $i\in I$ and $o\in g(i)\cap N_G(i)$.
 	Then, for every $o'\in N_G(i)\setminus\{o\}$, apply Lemma~\ref{lem:symmetric-difference} to the pair $o',o$.
 	At the end of this process, $o$ will be the only neighbour of $i$, so we can go to Step~\ref{step:IO}.
 \end{enumerate}
  Each Step~\ref{step:toggle} reduces the number of edges and each Step~\ref{step:IO} reduces the number of vertices in $I\setminus O$.
  While Step~\ref{step:symdi} may increase the number of edges, it always leads immediately to Step~\ref{step:IO}.
  Thus the tuple $(\abs{I\setminus O}, \abs{E})$ strictly decreases in lexicographical ordering, and the procedure terminates with $I\sse O$ and no edges remaining in the graph.
 
 \item \emph{Permute outputs if needed.} This uses Lemma~\ref{lem:permute-outputs}.
\end{enumerate}

It is possible to push back the point at which a gflow is needed in the trivialisation algorithm by pivot-and-deleting pairs of adjacent internal $XY$-measurements while such pairs exist.
This means only having to run the flow-finding algorithm when the total number of qubits remaining is at most $\abs{I} + 2\abs{O}$.

By reversing the steps of the above algorithm, it is possible to generate any labelled open graph with Pauli flow.
Moreover, as Pauli measurements are removed in the first step of the trivialisation algorithm, the same approach can be used to generate arbitrary diagrams with gflow by simply not inserting any Pauli-measured vertices.
Similarly, if a diagram with only $XY$-measurements is desired, this also naturally follows from inverting the steps of the trivialisation algorithm.
We will consider a generating algorithm for diagrams with flow in Section~\ref{s:generating}.

\subsection{Minimality}

Removing any one of the rules of Figure~\ref{fig:flow-preserving} would mean losing the ability to make trivial (or, conversely generate) arbitrary diagrams with flow: in other words, the set is minimal.
This is straightforward to see: \eqref{IO} is the only rule that can turn an element of $I\cap O$ into two vertices of which one is an input and non-output and the other is an output and non-input.
The $Z$-like insertion and deletion rule \eqref{ZL} is the only rule that can change the number of connected components of the diagram.

Finally, the local complementation rule \eqref{LC} is the only rule that can change measurement labels on vertices, which is the only way of producing $X$- or $Y$-measurements.
It is also required in the planar-only setting as the other rules cannot change the topology of the $XY$-skeleton of a diagram: by this `skeleton' we mean the diagram that results from considering only the $XY$-measurements and outputs; this induced subdiagram always has gflow \cite{backens_there_2021}.
Rule \eqref{ZL} clearly does not affect the skeleton; and rule \eqref{IO} can only produce `dangling paths' leading to inputs or outputs, which do not change the topology of the skeleton.

\section{Generating one-way computations with flow}
\label{s:generating}

We will focus here on generating one-way computations with flow where all measurements are of type~$XY$.
These measurements suffice to implement any unitary embedding \cite{danos_measurement_2007} and the restriction will simplify the initial analysis.
Moreover, since the trivialisation procedure of Section~\ref{s:procedure} begins by removing all measurements with labels other than $XY$, we can conversely generate any one-way computation by first producing an $XY$-only skeleton, and then adding other measurements afterwards.

\subsection{A simplified approach for generating $XY$-only computations}

Starting from the trivial computation with $n_I$ inputs and $n_O$ outputs, we will generate a one-way computation with $n$ qubits in total.
To begin, we will show how to generate such a diagram without needing to keep track of a gflow via a special variant of vertex splitting (cf. Remark~\ref{rem:vertex-splitting}) that applies only in the vicinity of the outputs, and is always flow-preserving.

A vertex splitting operation involves two $YZ$-insertions, as can be seen by reversing the merging procedure of Lemma~\ref{lem:merge-neighbours}.
By choosing a suitable ordering of the insertions, one insertion will be with a single neighbour that is either $XY$-measured or an output; this insertion is always flow-preserving \cite[Corollaries~4.6 and~B.2]{backens_inserting_2025}.
The idea of applying the splitting rule in the vicinity of outputs is to choose the neighbourhood $W$ for the second $YZ$-insertion as follows:
\begin{itemize}
	\item $W\cap\comp{O} = \odds{G}{K}$ for some $K\sse O\setminus I$, i.e.\ the non-output neighbours of the new vertex $z$ must be exactly the odd neighbourhood of some subset of outputs.
	The set $K$ cannot contain any inputs as its elements will be in the correction set of the new vertex.
	Indeed, setting the correction set of the new vertex to be $\{z\}\cup K$ ensures that it produces trivial by-products on all non-outputs.
	
	\item The new vertex may also have output neighbours.
	Yet since $z$ needs to be $YZ$-measured, we require $\abs{W\cap K}\equiv 0 \bmod 2$ so that $z\notin\odd{\{z\}\cup K}$.
\end{itemize}
Given these two conditions, the new vertex will be inserted at the top of the partial order, i.e.\ maximal among measured vertices, so gflow is preserved without further conditions.
In particular, it is not necessary to know a gflow on the original labelled open graph to decide whether the insertion is flow-preserving.
This is formalised in the following proposition, where we separate out one element of the set of outputs as $K = L\cup\{o\}$, where $o$ represents the output that is being split.

\begin{proposition}\label{prop:output-splitting}
	Let $\Gamma = (G,I,O,\ld)$ be a labelled open graph where $\ld(\comp{O}) = \{XY\}$ which has gflow.
	Suppose $o\in O\setminus I$ has neighbourhood $N := N_G(o)$.
	Pick $\corrset\sse O\setminus(I\cup\{o\})$ and $\outneighb\sse O\setminus\{o\}$ such that $o\notin\odds{G}{\corrset}$ and $\abs{\corrset\cap (\odds{G}{\corrset}\symd \outneighb\symd N)}\equiv 0 \bmod 2$.
	Define $W := \odds{G}{\corrset}\symd \outneighb\symd N$.
	Then splitting $o$ over $W$ as in the diagram below is gflow-preserving:
	\[
	\input{vertex-splitting-output0.tikz} \quad\to\quad \input{vertex-splitting-output5.tikz}
	\]
\end{proposition}

\begin{remark}
	The requirement $o\notin\odds{G}{\corrset}$ in the above proposition ensures that $o\notin W$, as it does not make sense to split a vertex over a set containing itself.
	The property $\abs{\corrset\cap W}\equiv 0\bmod 2$ ensures that the second $Z$-like insertion in the proof below is a $YZ$-insertion (not an $XZ$-insertion).
\end{remark}

\begin{proof}
	The transformation is basically a reverse of that in the proof of Lemma~\ref{lem:merge-neighbours}.
	We have:
	\begin{multline*}
		\input{vertex-splitting-output0.tikz}
		\quad\overset{\eqref{ZL}}{\to}\quad \input{vertex-splitting-output1.tikz}
		\quad\overset{\eqref{ZL}}{\to}\quad \input{vertex-splitting-output2.tikz}
		\quad\overset{\eqref{pivot}}{\to}\quad \input{vertex-splitting-output3.tikz} \\
		\quad\overset{\eqref{LC}}{\to}\quad \input{vertex-splitting-output4.tikz}
		\quad\overset{\eqref{ZL}}{\to}\quad \input{vertex-splitting-output5.tikz}
	\end{multline*}
	It is straightforward to show that the first $YZ$-insertion is unconditionally gflow-preserving.
	As $XZ$-deletion, pivoting, and local complementation always preserve gflow, it only remains to consider the second $YZ$-insertion.
	
	For ease of reference, denote by $y$ the first newly-inserted $YZ$-measured vertex (whose only neighbour initially is $o$) and denote by $z$ the second new $YZ$-measured vertex, whose neighbours are $W\cup\{y\}$.
	Write $G'=(V',E')$ for the graph after the two $YZ$-insertions so that $V' := V\cup\{y,z\}$ and
	\[
	E' = E \cup \{\{o,y\}, \{y,z\}\} \cup \{\{z,w\}\mid w\in W\}.
	\]
	Note that $o\notin\odds{G}{\corrset}$ and $o\notin \outneighb$ by assumption, and $o\notin N$ as we are working with simple graphs, therefore $o\notin W = \odds{G}{\corrset}\symd \outneighb\symd N$.
	
	For any $v\in \corrset$, the updated neighbourhood $N_{G'}(v)$ equals either $N_G(v)$ or $N_G(v)\cup\{z\}$.
	This is because all new edges involve at least one of $y$ and $z$, and $N_{G'}(y)=\{o,z\}$.
	Moreover, $z\in N_{G'}(v)$ if and only if $v\in W$.
	Since $\abs{\corrset\cap W}\equiv 0 \bmod 2$ by assumption, this means
	\[
	\odds{G'}{\corrset} = \odds{G}{\corrset} \symd \Symdi{v\in \corrset\cap W} \{z\} = \odds{G}{\corrset}.
	\]
	Then:
	\begin{align*}
		\odds{G'}{\corrset\cup\{o, z\}}
		&= \odds{G'}{\corrset} \symd N_{G'}(o) \symd N_{G'}(z) \\
		&= \odds{G}{\corrset} \symd (N\cup\{y\}) \symd (W\cup\{y\}) \\
		&= W\symd N\symd \outneighb \symd N \symd \{y\} \symd W \symd\{y\} \\
		&= \outneighb,
	\end{align*}
	where the second step uses the property that the odd neighbourhood of $\corrset$ does not change, and the third step uses the involutive property of symmetric difference as well as the fact that $y\notin V$, which implies $y\notin N$ and $y\notin W$.
	
	Suppose the gflow on $\Gamma$ is $(g,\prec)$.
	Define $\Gamma' = (G',I,O,\ld')$, where $\ld'$ is the extension of $\ld$ to domain $V'\setminus O$ that satisfies $\ld'(y)=\ld'(z)=YZ$.
	Let
	\[
	g'(v) := \begin{cases}
		\{y\} &\text{if } v = y \\
		\corrset\cup\{o, z\} &\text{if } v = z \\
		g(v) &\text{otherwise}
	\end{cases}
	\]
	Note that $g'(z)\setminus\{z\}\sse O$ and $g'(y)\setminus\{y\}=\emptyset\sse O$.
	Moreover, $\odds{G'}{g'(y)} = \{o,z\}$ and $\odds{G'}{g'(z)} = \outneighb \sse O$.
	This means $y$ precedes $z$, but otherwise the only successors of $y$ or $z$ are outputs.
	Hence it is always possible to extend $\prec$ to a partial order $\prec'$ such that $(g',\prec')$ is a gflow, because no cycle can be created when inserting $y$ and $z$ at the very top of the partial order on $\comp{O}$.
\end{proof}

\begin{corollary}
	If $\Gamma$ in the above proposition has the property that there are no edges between outputs, then the conditions on $\corrset$ and $\outneighb$ simplify to $\abs{\corrset\cap \outneighb}\equiv 0\bmod 2$; so in particular the choice `$\outneighb=\emptyset$ and $\corrset$ arbitrary' always works.
	
	If there are initially no edges between outputs and if $\outneighb=\emptyset$, then $W\sse\comp{O}$ and hence Proposition~\ref{prop:output-splitting} does not introduce any edges between outputs.
	Therefore it is possible to work entirely in the setting where there are no edges between outputs.
\end{corollary}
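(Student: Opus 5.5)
The plan is to read off both claims directly from the definitions in Proposition~\ref{prop:output-splitting}. For the first claim, I assume $G$ has no edges between outputs and simplify the two conditions on $\corrset$ and $\outneighb$. Since $\corrset\sse O$ and no output has an output neighbour, $\odds{G}{\corrset}\sse\comp{O}$. Two consequences follow.

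First, $o\in O$ cannot lie in $\odds{G}{\corrset}$, so the condition $o\notin\odds{G}{\corrset}$ holds automatically.

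Second, for the parity condition I intersect $\corrset$ with $W=\odds{G}{\corrset}\symd\outneighb\symd N$, using that intersection distributes over symmetric difference. This gives
\[
\corrset\cap W = (\corrset\cap\odds{G}{\corrset})\symd(\corrset\cap\outneighb)\symd(\corrset\cap N).
\]
The first term is empty because $\odds{G}{\corrset}\sse\comp{O}$ while $\corrset\sse O$. The third term is empty because $N=N_G(o)$ contains no outputs, as $o$ is an output with no output neighbours. Hence $\abs{\corrset\cap W}=\abs{\corrset\cap\outneighb}$, and the parity condition becomes $\abs{\corrset\cap\outneighb}\equiv 0\bmod 2$.

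Taking $\outneighb=\emptyset$ makes this condition trivially true for every $\corrset\sse O\setminus(I\cup\{o\})$. This gives the `$\outneighb=\emptyset$ and $\corrset$ arbitrary' clause.

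For the second claim, with $\outneighb=\emptyset$ we get $W=\odds{G}{\corrset}\symd N$, and both sets lie in $\comp{O}$, so $W\sse\comp{O}$. I then inspect the edge changes in the splitting diagram of Proposition~\ref{prop:output-splitting}. The end result keeps $o$ as an output, gives a new measured vertex the neighbourhood $W$ (up to the edge to $o$), and leaves $o$ with a neighbourhood derived from $N$, $W$ and the new vertex.

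The step I expect to need the most care is confirming from the proof's $E'$ that the composite pivot, local complementation and deletion only toggles edges with at least one endpoint among $\{y,z\}$, the new vertex, or vertices of $W\cup N\sse\comp{O}$. That would mean no edge is ever created between two elements of $O$. I would check this by tracking the neighbourhoods through the steps: the pivot on $\{y,z\}$ followed by local complementation on $y$ toggles edges only among $N_{G'}(y)\cup N_{G'}(z)=\{o\}\cup W\cup\{y,z\}$, and the only output in that set is $o$, so any toggled edge at $o$ goes to a vertex of $W\sse\comp{O}$ or to a new vertex.

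Finally, the new vertex is $XY$-measured, so it is not an output either. The property of having no edges between outputs is therefore preserved, and by induction the whole generation can stay in that setting.
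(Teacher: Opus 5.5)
Your proposal is correct and follows the reasoning the paper leaves implicit (it states this corollary without a separate proof). With no output--output edges, both $\odds{G}{\corrset}$ and $N$ lie in $\comp{O}$, so $o\notin\odds{G}{\corrset}$ holds automatically and the parity condition reduces to $\abs{\corrset\cap\outneighb}\equiv 0 \bmod 2$. For $\outneighb=\emptyset$, every edge touched by the splitting has both endpoints in $\{o,y,z\}\cup W$, and this set contains only one output. There is one small slip: the local complementation in the splitting derivation is centred on $z$, not on $y$, since after the pivot $z$ has neighbourhood $\{o,y\}$ and is then removed by the $XZ$-deletion. Your containment argument holds for either centre, so this does not affect the proof.
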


\begin{remark}
	If $\corrset = \outneighb = \emptyset$, then Proposition~\ref{prop:output-splitting} corresponds to \eqref{IO}, except the proposition cannot be applied to an element of $I\cap O$.
\end{remark}

Now, reversing the trivialisation procedure of Section~\ref{s:procedure} suggests the following two-step generation procedure:
\begin{enumerate}
	\item Permute outputs as needed using Lemma~\ref{lem:permute-outputs}.
	Unfuse a subset of the vertices in $I\cap O$ into separate inputs and outputs using \eqref{IO}, then modify the connectivity of the resulting diagram using Lemmas~\ref{lem:toggle-edge} and~\ref{lem:symmetric-difference}.
	(Any remaining elements of $I\cap O$ will participate in the computation in somewhat trivial ways.)
	
	\item Apply Proposition~\ref{prop:output-splitting} until the desired total number of qubits is achieved.
\end{enumerate}

\begin{remark}
	Interestingly, here, it is the `merge' direction used in the trivialisation procedure that requires knowing the flow.
	The flow ensures that the merge operation of Lemma~\ref{lem:merge-neighbours} is applied to a measured vertex that is maximal in the partial order, and with an output neighbour that is in the correction set of the chosen measured vertex.
	On the other hand, the `split' direction does not require keeping track of the flow as it is possible to add a new vertex to the `top of the partial order' just below the outputs without knowing what the partial order is.
\end{remark}

\subsection{Analysis of the simplified approach}

Another advantage of creating one-way computations via these insertions at the end is that the number of options for the next insertion does not increase with the size of the diagram that has already been generated.
For a general $Z$-like insertion into a diagram with $n$ qubits, there are $2^n$ options for the neighbourhood (the set denoted $S$ in Theorem~\ref{thm:planar-Z-like-insertion}).
Many of those neighbourhoods will not be compatible with a flow-preserving insertion, but we are not aware of any general method of determining which neighbourhoods work without running through the conditions of Theorem~\ref{thm:planar-Z-like-insertion} for every such $S$.
The one exception is Corollary~\ref{cor:simplified-planar-insertion}, which considers the case where the neighbourhood consists of exactly two vertices, each of which is either $XY$-measured or an output.
There are some other simple cases where $YZ$-insertion is always flow-preserving, such as $S = \emptyset$ or $S$ consisting of a single $XY$-measured vertex \cite[Corollary~4.6]{backens_inserting_2025}, as well as $S\sse I$ or $S\sse O$ \cite[Corollaries~B.2 and~B.3]{backens_inserting_2025}, yet these cover only a small fraction of the possible neighbourhoods.

Contrasting with this, let $m:=\abs{O}$, then in Proposition~\ref{prop:output-splitting}:
\begin{itemize}
	\item There are $m$ choices for the vertex $o$.
	
	\item If $o$ has no output neighbours (i.e.\ $N_G(o)\cap O = \emptyset$), then there are $2^{m-1}$ choices for $\corrset$.
		If $o$ does have output neighbours, there are $2^{m-2}$ choices for $\corrset$.
		
		To see this, note that $\abs{\pow{O\setminus\{o\}}} = 2^{m-1}$.
		If $o$ has no output neighbours, any $\corrset\sse O\setminus\{o\}$ is valid.
		If $o$ has output neighbours, fix one such neighbour $v\in N_G(o)\cap O$, then choose $\corrset'\sse O\setminus\{o,v\}$.
		Now if $o\in\odds{G}{\corrset'}$, set $\corrset := \corrset'\cup\{v\}$, and if $o\notin\odds{G}{\corrset'}$, set $\corrset := \corrset'$.
		In this way, every $\corrset'$ gives rise to exactly one valid choice of $\corrset$ and every valid $\corrset$ arises from exactly one $\corrset'$.
		Hence there are $\abs{\pow{O\setminus\{o,v\}}} = 2^{m-2}$ choices.
		
	\item Given $o$ and $\corrset$, if $\corrset=\emptyset$, there are $2^{m-1}$ choices for $\outneighb$.
		If $\corrset$ is non-empty, there are $2^{m-2}$ choices for $\outneighb$.
		The argument is similar to that for the previous point.
\end{itemize}
The overall number of options is still exponential, but it now depends on the fixed number of outputs rather than on the (increasing) total number of qubits in the computation.
Moreover, whether an insertion is possible can be deduced directly from the graph, there is no need to compute (or keep track of) a flow.

Consider now the different ways of generating some target computation on $n$ qubits, of which there are $n_I$ inputs and $n_O$ outputs.
Note that each sequence of output-splittings (i.e.\ applications of Proposition~\ref{prop:output-splitting}) that generates the target computation is associated with a totalisation of the gflow partial order over the measured vertices.
Therefore, the probability of generating a specific target computation (among all computations with the same numbers of inputs, of outputs, and of qubits) is proportional to the number of such totalisations.
In other words, a target computation which has a gflow with a `less restrictive' induced partial order (or even, if $n_O>n_I$, different gflows with different partial orders) will be more likely than a computation that has a `more restrictive' induced partial order.

\begin{example}
	For example, consider the following two diagrams with $n_I=n_O=2$ and $n=6$ that arise from the same initial input-output unfusion step shown in the middle:
	\[
		\input{generation-ex-3-3.tikz} \qquad\leftarrow\qquad
		\input{generation-ex-2-2.tikz} \qquad\rightarrow\qquad
		\input{generation-ex-4-2.tikz}
	\]
	For the left-hand side diagram, the partial order is $i_1\prec c\prec o_1$ and $i_2\prec d \prec o_2$, so there are two ways of generating the diagram from the one in the middle, depending on whether $c$ or $d$ is inserted first.
	For the right-hand side diagram, the partial order is $i_1\prec c\prec d\prec o_1$ and $i_2 \prec o_2$, so there is only one way of generating the diagram from the one in the middle.
	Thus the left-hand side diagram is twice as likely to be generated than the right-hand side one, assuming all choices made during the generation process are uniformly random.
\end{example}

\section{Conclusions}
\label{s:conclusions}

We have introduced a set of three rewrite rules for computations in the one-way model which preserve the existence of flow but not necessarily the interpretation.
The rule set is simple and minimal.
It can be used for Pauli flow and for extended gflow, where measurements can be in all three planes; both are preserved by the rules of Figure~\ref{fig:flow-preserving} (as long as \eqref{ZL} is only used to insert planar measurements in the gflow case).
Moreover, it is also possible to work in the setting where all measurements are $XY$: while the property of all measurements being labelled $XY$ is not preserved by individual applications of \eqref{LC} and \eqref{ZL}, these rules can nevertheless be composed into derived rewrite rules that do preserve the all-$XY$ property, e.g.\ \eqref{PD} and the lemmas in Section~\ref{s:derived}.

Regarding the conditions for flow preservation: the `deletion' direction of \eqref{ZL} and both directions of \eqref{IO} preserve flow unconditionally; \eqref{LC} breaks flow only if the top vertex (the centre of the local complementation) is an input, which is straightforward to avoid.
This leaves the `insertion' direction of \eqref{ZL} as the only complicated rule.
Nevertheless, as pointed out in Ref.~\cite{backens_inserting_2025}, at least when working with unitary one-way computations where $\abs{I}=\abs{O}$, it is more efficient to keep track of and update the unique focused flow, rather than run a full flow-finding algorithm for each insertion.
Additionally, following the approach of Section~\ref{s:generating} allows the generation of arbitrary all-$XY$ diagrams with gflow without needing to know the gflow at all.
We provide some basic analysis of the properties of this approach, such as what choices the generating algorithm needs to make and which diagrams are more or less likely to be generated.

It is interesting that our not necessarily interpretation preserving rule set is so close to the complete set of rewrite rules for one-way computations with only Pauli measurements \cite{mcelvanney_complete_2023}.
Indeed, the additional rules of the full flow- and interpretation-preserving rule set are all concerned with handling phase labels~\cite{backens_completeness_2026}, indicating a clear split between `structural' rules that belong to the Clifford fragment and `phase-managing' rules.
This may also be related to the special rule played by the traditional Clifford-fragment ZX-calculus rewrite rules in the context of ZX-flow, a version of which is unconditionally preserved by Clifford-fragment rewrites \cite{kissingerZXFlowFlexibleCriterion2026}.

While we have shown which rules are fundamentally necessary and sufficient for generating arbitrary one-way computations with flow, their practical application has not yet been explored.
Even from a more theoretical side, questions remain: for example, what strategies should be used to create interesting and useful computations with flow?
Or, how should one go about generating computations with flow that satisfy certain properties of interest?
We leave these questions to future work.

\subsection*{Acknowledgements}

Thank you to Ivica Turkalj for many interesting and useful conversations during the preparation of this paper.
I would also like to thank Piotr Mitosek for helpful comments about the completeness proof approach.

This work is supported by the Plan France 2030 through the PEPR integrated project EPiQ ANR-22-PETQ-0007 and the HQI platform ANR-22-PNCQ-0002; and by the European project MSCA Staff Exchanges Qcomical HORIZON-MSCA2023-SE-01.
The project is also supported by the Maison du Quantique MaQuEst.

\bibliographystyle{eptcs}
\bibliography{biblio,MBQC}

\end{document}